\documentclass[11pt,a4paper]{article}
\usepackage[margin=1in]{geometry}

\usepackage{setspace}
\singlespacing

\usepackage{amsthm}
\usepackage{amsmath}
\usepackage{amsfonts}
\usepackage{graphicx}
\usepackage{hyperref}
\usepackage{color}
\usepackage{caption}
\usepackage{subcaption}
\usepackage{float}
\usepackage{soul}
\usepackage{multirow}
\usepackage{xspace}
\usepackage{todonotes}
\usepackage[inline]{enumitem}
\usepackage{microtype,amssymb}
\usepackage{cleveref}
\usepackage{thm-restate}
\usepackage{framed}
\usepackage[ruled]{algorithm}
\usepackage[noend]{algpseudocode}

\algrenewcommand\textproc{}
\usetikzlibrary{matrix}
\newtheorem{theorem}{Theorem}
\newtheorem{lemma}[theorem]{Lemma}

\newtheorem{corollary}[theorem]{Corollary}
\newtheorem{definition}[theorem]{Definition}
\newtheorem{prp}[theorem]{Proposition}

\newtheorem{remark}[theorem]{Remark}

\newtheorem{problem}{Problem}
\newtheorem{example}{Example}
\newtheorem{observation}[theorem]{Observation}

\def\RR{{\mathbb R}}
\def\NN{{\mathbb N}}

\def \eps{\varepsilon}

\def \rank{\mathrm{rank}}

\def \vectorset{\mathrm{set}}

\DeclareMathOperator{\feas}{D}
\DeclareMathOperator{\hmin}{a}
\DeclareMathOperator{\hmax}{b}

\title{A near-linear time approximation scheme for $(k,\ell)$-median clustering under discrete Fr\'echet distance}
%\author{Anonymous author(s)}

\author{
Anne Driemel\thanks{University of Bonn. Affiliated with Lamarr Institute for Machine Learning and Artificial Intelligence.}
\and
Jan Höckendorff\thanks{Department of Mathematics and Computer Science, University of Cologne. Funded by the Deutsche Forschungsgemeinschaft (DFG, German Research Foundation) – Project Number 459420781.}
\and Ioannis Psarros \thanks{Archimedes, Athena Research Center. Partially supported by project MIS 5154714 of the National Recovery and Resilience Plan Greece 2.0 funded
by the European Union under the NextGenerationEU Program.}
\and Christian Sohler \thanks{Department of Mathematics and Computer Science, University of Cologne.}
}

\date{}

\begin{document}

\maketitle
\begin{abstract}
A time series of complexity $m$ is a sequence of $m$ real valued measurements.
The discrete Fréchet distance $d_{dF}(x,y)$ is a distance measure between two time series $x$ and $y$ of possibly different complexity. 
Given a set of $n$ time series represented as $m$-dimensional vectors over the reals, the $(k,\ell)$-median problem under discrete Fréchet distance
aims to find a set $C$ of $k$ time series of complexity $\ell$ such that 
$$\sum_{x\in P} \min_{c\in C} d_{dF}(x,c)$$ is minimized. In this paper, we give the first near-linear time $(1+\eps)$-approximation algorithm for this problem when $\ell$ and $\eps$ are constants but $k$ can be as large as $\Omega(n)$. 
We obtain our result by introducing a new dimension reduction technique for discrete Fréchet distance and then adapt an algorithm of Cohen-Addad et al. \cite{CFS21} to work on the dimension-reduced input. As a byproduct
we also improve the best coreset construction for $(k,\ell)$-median under discrete Fréchet distance
\cite{CDRSS25} and show that its size can be independent of the number of input time series \emph{ and } their complexity.
\end{abstract}
\section{Introduction}

Clustering is the process of grouping a set of elements into subsets called clusters in such a way that, ideally, elements within a subset are similar to each other and elements that are not in the same cluster are not similar. Clustering is a method from unsupervised learning with a wide range of applications
\cite{CFS21}.
In order to be able to perform  clustering, one needs a distance or similarity measure that allows to compare different elements. The quality of
a clustering depends heavily on how well a similarity measure reflects the
ground truth similarity of the objects. Once we have decided on the particular distance or similarity measure to be used, a common approach is to
formulate the clustering problem as finding a partition that minimizes some cluster objective function. Often, clusters are represented by a cluster center and the quality of a cluster is determined by the distances of its members to the center. 

In the theoretical computer science community some of the most frequently studied center based clustering methods are $k$-median, $k$-means and $k$-center clustering. In the $k$-median problem we aim to find a set of $k$ centers, such that the sum of distances to the nearest center is minimized. In the $k$-means variant, the sum of squared distances is to be minimized and in the $k$-center version, we aim at minimizing the maximum distance to the nearest center.
In all variants, the desired number of clusters $k$ is given as input. 
Often, it is assumed that the underlying distance measure is a metric. For general metric spaces, one usually considers the discrete $k$-median problem where the centers are supposed to be part of a set of potential candidates $F$ and the set of points to be clustered is $C$. The currently best algorithm is a randomized polynomial time approximation algorithm that guarantees a $(2+\eps)$-approximation \cite{CLSS25}.
For metrics of bounded doubling dimension
there is a linear time approximation scheme that with constant probability computes a $(1+\eps)$-approximation \cite{CFS21}.
For specific metric spaces such as the Euclidean metric one also allows the centers to be arbitrary points of the metric space. Here one usually aims at computing a $(1+\eps)$-approximation as well \cite{CFS21}.

 In this paper, we will consider a variant of $k$-median clustering under the discrete Fréchet distance. The discrete Fréchet distance originates from a distance measure  defined by Maurice Fréchet in the context of parametrized curves~\cite{frechet1906quelques,deza2009encyclopedia}.
 
 A standard way to introduce the  Fréchet distance uses the picture of a person walking a dog, where the dog moves along one curve and the person moves along the other curve, each at their own varying speed. At each point in time, the person and the dog are connected by a leash. The Fréchet distance is the minimum length of a leash that permits a walk from the beginning of the curves to the end without backtracking. 
 In this paper, we will focus on the 
 discrete Fréchet distance. The pointwise distances contributing to the discrete Fréchet distance are measured only between the discrete points of the input time series. In keeping with above metaphor, the dog and its owner are required to jump from one point to the next along the time series.
 Furthermore, we will only consider the
 case $d=1$, i.e. each time series is a sequence of real values.

The discrete Fréchet distance (as well as its continuous counterpart) is motivated 
by problems in data analysis when time series come from measurements that are not well aligned. That is, 
assume that a time series results from measurements of a continuous signal at certain points of time, then the discrete Fréchet distance tolerates different and/or non-regular sampling rates better than classical distance measure like squared Euclidean distance and is therefore a useful alternative. 
This motivates the problem of clustering time series under Fréchet distance.
Unfortunately, the time series that minimizes the sum of distances to a given set of $n$ time series of length $m$ may have a complexity as large as $\Omega(mn)$, which leads to unwanted effects such as overfitting. For this reason, the $(k,\ell)$-median problem 
has been introduced \cite{DKS16} for the related continuous Fréchet distance and was later also considered for the discrete Fréchet 
distance \cite{nath2021k}. It is known that the Fréchet distance is a (pseudo-)metric and so one can apply algorithms for general metric spaces and get a constant approximation using suitable simplifications of the input curves as a candidate set for the centers.
The doubling dimension of the discrete Fréchet distance of real-valued time series of length at most $m$ can be $\Omega(m)$, so results for metrics of bounded doubling dimension do not give a polynomial time algorithm. 
For the $(k,\ell)$-median problem, there is a $(1+\eps)$-approximation algorithm that is near linear  in $n$ and $m$ but exponential in $k$ \cite{nath2021k} (but it also works for ambient dimension $d>1$).

The main result of this paper is a near-linear time approximation scheme for $(k,\ell)$-median clustering for $n$ real-valued time series of length $m$ under discrete Fréchet distance, where $\ell$ and $\eps$ are assumed to be constants and polynomial running time in $k,m$ and $n$.

\subsection{Related work}

There is extensive literature on the complexity of computing the Fréchet distance. Computing the distance for two curves of complexity $m$ takes roughly $O(m^2)$ time both for the discrete and the continuous version~\cite{alt1995computing, agarwal2014computing}. This is optimal under the strong exponential time hypothesis~\cite{B14,abboud2018tighter}, even for curves in 1-dimensional ambient space ~\cite{bringmann2016approximability,buchin2019seth}, and it holds even if we allow a (small) constant approximation.

The $k$-median problem in metric spaces is known to be hard to approximate with a factor better than $1+2/e$ unless set cover can be approximated within a factor $c \ln n$ for $c<1$ \cite{JMS02}.
 A number of different constant factor polynomial time approximation algorithms are known \cite{cowen2003constant,jain2001approximation,arya2004local,gupta2008simpler,cohen2022improved,mettu2003online,charikar2012dependent,thorup2005}, and the currently best approximation ratio is $(2+\eps)$ \cite{CLSS25}.

In the Euclidean plane it is NP-hard \cite{megiddo1984complexity}.
The first polynomial time approximation scheme for $k$-median in the Euclidean plane has been developed by Arora et al. \cite{arora1998approximation} and later improved to near-linear time in $\mathbb R^d$ when $d$ is constant \cite{kolliopoulos2007nearly} . This result has been generalized to metric spaces of bounded doubling dimension \cite{T04}
         and later to a near-linear approximation scheme  \cite{CFS21}.
   
 Driemel et al. \cite{DKS16} defined the $(k,\ell)$-clustering problem for time series as follows: Given a set $P$ of $n$ time series of complexity $m$ and parameters $k,\ell \in \NN$ find $k$ center time series of complexity $\ell$, such that (a) the maximum distance of an element in $P$ to its closest center time series or (b) the sum of these distances is minimized. Variant (a) is referred to  as $(k,\ell)$-center and (b) as $(k,\ell)$-median. Under the continuous Fréchet distance, they developed near-linear time $(1+\eps)$-approximation algorithms for both clustering variants, assuming $\eps,k$, and $\ell$ are constants. They complement these algorithmic results with hardness results, showing that both $(k,\ell)$-median and $(k,\ell)$-center are NP-hard under continuous Fréchet distance. Approximating $(k,\ell)$-median for polygonal curves in arbitrary dimensions was recently studied in \cite{BDR23}. Cheng and Huang give the first $(1+\eps)$-approximation algorithm for $(k,\ell)$-median under continuous Fréchet distance in $d > 1$~\cite{CH23}. Both clustering problems are also NP-hard under the discrete Fréchet distance and even for the case $k = 1$ \cite{BDGHKLS19} \cite{buchin2019hardness}. 
 Buchin et al. developed the first $(1+\eps)$-approximation algorithm for $(k,\ell)$-median under discrete Fréchet distance, which runs in polynomial time assuming $k$ to be constant. Nath and Taylor \cite{nath2021k} improved this to near-linear time for constant $k$. 
Buchin and Rohde \cite{buchin2022coresets} designed the first coreset construction for $(k,\ell)$-median under both variants of the Fréchet distance, where the size of the coreset has logarithmic dependence on the number of input curves.
Recently, Cohen-Addad et al. introduced a coreset construction for $(k,\ell)$-median under discrete Fr\'echet distance that has size independent of the number of input curves~\cite{CDRSS25}.

Related to our dimension reduction are some data structures for approximate nearest neighbor search under discrete Fr\'echet distance \cite{driemel2019sublinear, filtser2023static,FFK23}. In the asymetric setting where the query time series has complexity $\ell$, the data structures cited above replace each input time series by a set of lower dimensional time series. This is fundamentally different from our dimension reduction, which replaces each time series with exactly one lower dimensional time series.

\section{Technical Overview}

The main result of this paper is a near-linear time $(1+\eps)$-approximation algorithm for $(k,\ell)$-median clustering under the discrete Fréchet distance, where the parameters $\ell$
and $\eps$ are considered to be constants. All prior algorithms have a running time exponential in the number of clusters $k$ \cite{BDGHKLS19, nath2021k}.

\begin{theorem} (Informal version of \Cref{theorem:final_formal})
Let $\eps \in (0,1/2]$ and $\ell \in \NN$
be constants. Given a set of $n$ real-valued time series of complexity $m$, and parameters $\eps, \ell$ and $k$,
Algorithm \ref{alg:ltas} computes in time
$\tilde O(mn)$ and with success probability at least $1-\eps$ a $(1+\eps)$-approximation to the $(k,\ell)$-median problem under discrete Fréchet distance. 
\end{theorem}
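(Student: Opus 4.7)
The plan is to reduce the input, which consists of $n$ time series of complexity $m$, to an instance of $n$ time series of much smaller complexity $m' = m'(\ell,\eps)$ on which the near-linear time approximation scheme of Cohen-Addad, Feldmann and Saulpic~\cite{CFS21} can be adapted to compute a $(1+\eps)$-approximation of the optimal $(k,\ell)$-median cost. The total running time is then dominated by the dimension reduction, which is linear in the input size $mn$, plus a subroutine that is near-linear in $n$ whose dependence on $\ell$ and $\eps$ is treated as a constant.

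\textbf{Dimension reduction.} First I would compute a constant-factor estimate $\hat{\opt}$ of $\opt$ by running a standard metric $k$-median approximation with $\ell$-simplifications of the input curves as candidate centers; this takes $\tilde O(mn)$ time. Next I design a randomized map $\phi : \RR^m \to \RR^{m'}$ such that, with high probability, for every time series $c$ of complexity at most $\ell$,
\[
 \bigl| d_{dF}(\phi(x), c) - d_{dF}(x, c) \bigr| \le \eps\,\hat{\opt}/n.
\]
Since $\opt = \Omega(\hat{\opt})$, summing the additive error over all $n$ input curves inflates the objective by at most $O(\eps)\cdot\opt$. Concretely, $\phi(x)$ would be obtained by snapping the values of $x$ onto an exponentially spaced grid calibrated to $\hat{\opt}/n$ and then collapsing long monotone runs, producing a signature-like representation whose length depends only on $\ell$ and $\eps$; the key observation is that in any optimal Fréchet matching to a center of complexity $\ell$, at most $O(\ell)$ runs of $x$ can contribute non-trivially to the cost.

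\textbf{Adapting~\cite{CFS21}.} On the reduced instance, time series have complexity at most $m'(\ell,\eps)$, so the resulting metric space has doubling dimension bounded by a function of $\ell$ and $\eps$ only. I would then plug $d_{dF}$ into the hierarchical net construction of~\cite{CFS21} as the black-box metric, using a fast pairwise oracle for $d_{dF}$ on curves of complexity $O(m'(\ell,\eps))$ vs.\ $\ell$. Each distance query costs $O_{\ell,\eps}(1)$ time, so the entire subroutine runs in $\tilde O(n)$ time and produces a set of $k$ complexity-$\ell$ centers that is $(1+\eps)$-optimal in the reduced metric. By the distance preservation guarantee, the same set of centers yields a $(1+O(\eps))$-approximation in the original metric; a final rescaling of $\eps$ gives the claimed $(1+\eps)$-approximation.

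\textbf{Putting it together and main obstacle.} Composing the runtimes gives $\tilde O(mn)$ overall, and the success probability follows from a union bound over the randomness of the dimension reduction and of~\cite{CFS21}. The main obstacle I expect is the dimension reduction itself: the discrete Fréchet distance is not induced by a norm, its doubling dimension can be $\Omega(m)$, and ordinary $\eps$-simplifications preserve distances only to specific reference curves, not uniformly to all possible complexity-$\ell$ centers. The delicate point is to argue that a single compression $\phi(x)$ works simultaneously for every center $c$ that could appear in an approximately optimal solution; this is where the bootstrap on $\hat{\opt}$ and the improved coreset mentioned in the abstract play a role, letting us restrict attention to a bounded ``effective'' set of candidate centers against which $\phi$ must be correct.
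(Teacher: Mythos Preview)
Your high-level plan---reduce the complexity of each input time series to a constant depending only on $\ell$ and $\eps$, then run the doubling-dimension scheme of~\cite{CFS21}---is exactly the paper's strategy, and you correctly identify the dimension reduction as the crux. But the concrete reduction you sketch does not work, and the gap is precisely the one you flag as ``the main obstacle.''

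Your map $\phi$ snaps values to a grid of width $\eps\hat{\opt}/n$ and then ``collapses long monotone runs.'' Consider an input curve $x=(0,1,0,1,\dots)$ of complexity $m$, together with $n-1$ other curves that coincide with some center $c^\ast\in\RR^\ell$. Then $\hat{\opt}/n = \Theta(1/n)$ can be made arbitrarily small, so snapping leaves $x$ unchanged, and there are no monotone runs to collapse: your $\phi(x)$ still has complexity $m$. More generally, an additive guarantee of $\eps\hat{\opt}/n$ against \emph{all} $c\in\RR^\ell$ forces $d_{dF}(\phi(x),x)$ to be tiny whenever $\hat{\opt}$ is small, and there is no reason a curve of complexity $m'(\ell,\eps)$ should lie that close to $x$ under $d_{dF}$. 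The idea that ``at most $O(\ell)$ runs of $x$ contribute non-trivially'' is true for one fixed traversal to one fixed center, but different centers pick out different runs, so it does not yield a single short surrogate good for all of $\RR^\ell$.

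The paper resolves this with a genuinely different mechanism. After rounding values so that $|\vectorset(x)|=O(\ell/\eps)$, it observes that the Fr\'echet distance of $x$ to any $y\in\RR^\ell$ is determined by the set of \emph{$\ell$-profiles} of $x$: for each traversal, record only the ranks of the minimum and maximum value of $x$ matched to each $y_j$. There are only finitely many possible profile sets (a function of $\ell$ and $\eps$), so one can replace $x$ by the \emph{shortest} time series over $\vectorset(x)$ with the same profile set; the length of this shortest representative is a constant $d_0(\ell,\eps)$, and the distance to every $y\in\RR^\ell$ is preserved exactly (multiplicatively $(1\pm\eps)$ once the initial rounding is included). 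No global quantity like $\hat{\opt}$ is needed. A second point you glossed over: \cite{CFS21} requires a finite facility set, and the set of complexity-$\ell$ curves is unbounded; the paper handles this separately by building an explicit candidate set $F$ of size $\tilde O(n)$ (their Lemma~\ref{lemma:mediancandidates}) before invoking~\cite{CFS21}.
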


The $\tilde O$-notation drops here an $\log (nm)^{O(1)}$ factor. 

\subsection{High level approach} 
Our high-level approach to the problem is as follows. We first apply a new near-linear time dimension reduction algorithm that maps the input time series of complexity $m$ to a low-dimensional space (that is, to time series of low complexity) whose dimension depends only on the parameters $\ell$ and $\eps$ and that preserves distances to time series of complexity at most $\ell$ up to a factor of $(1\pm \eps)$.
Since the doubling dimension of the discrete Fr\'echet distance of curves of complexity at most $m$ is $O(m)$ and because the target dimension of our reduction depends only on the constant parameters $\eps$ and $\ell$, we have reduced our problem to a $(k,\ell)$-clustering problem with bounded doubling dimension. We then show how to adapt a linear time approximation scheme by Cohen-Addad et al. \cite{CFS21} 
for the $k$-median problem with bounded doubling dimension to our problem. Since their algorithm already allows to specify a set of facilities from which the centers are to be chosen, it suffices to construct a sufficiently small set of candidate points that contains a $(1+\eps)$-approximation and apply their algorithm on the dimension-reduced point set and set of candidate points.

\subsection{Dimension reduction}
Our main technical contribution is a new dimension reduction for the discrete Fr\'echet distance that can be summarized as follows.

\begin{restatable}{theorem}{thmdimreduction}{(Informal version of \Cref{theorem:dimreduction})}

Let $\varepsilon \in (0,1)$ and $\ell \in \NN$ be constants. There exists a constant $d_0 = d_0(\varepsilon, \ell)$ such that for arbitrary $m\in \NN$ and input time series $x \in \RR^m$ Algorithm \ref{alg: ComplexityReduction} computes in time $O(m \log^2 m)$ 
a time series $z \in \RR^{d_0}$ s.t. for all $y \in \RR^\ell$,
\begin{equation*}
    (1-\varepsilon) d_{dF}(x,y) \leq d_{dF}(z,y) \leq (1 + \varepsilon)  d_{dF}(x,y).
\end{equation*}
\end{restatable}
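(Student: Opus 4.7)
The plan is to exploit the fact that the discrete Fr\'echet distance between $x$ of complexity $m$ and any $y$ of complexity $\ell$ admits a block-decomposition:
$$
d_{dF}(x, y) \;=\; \min_\pi\, \max_{j \in [\ell]}\, \max\!\bigl(|y_j - \min_{i \in B_j} x_i|,\; |y_j - \max_{i \in B_j} x_i|\bigr),
$$
where $\pi$ ranges over partitions of $\{1,\ldots,m\}$ into $\ell$ contiguous blocks $B_1,\ldots,B_\ell$. Hence, to preserve $d_{dF}(x, y)$ for every complexity-$\ell$ query $y$, it suffices to preserve the block-min and block-max profiles across all contiguous $\ell$-block partitions up to a multiplicative $(1\pm\varepsilon)$ factor. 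This reframes the problem as a combinatorial one: approximate the profile of all $\binom{m-1}{\ell-1}$ contiguous partitions by a compact summary that depends only on $\varepsilon$ and $\ell$.

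To construct $z$, I would build a multi-scale, persistence-based simplification of $x$. Setting $R = \max x - \min x$, consider geometric scales $\sigma_k = 2^{-k} R$ for $k = 0, 1, \ldots, K$ with $K = O(\log(\ell/\varepsilon))$. At each scale $\sigma_k$, select the positions $S_k$ corresponding to the $O(\ell/\varepsilon)$ most persistent local extrema of $x$ with persistence amplitude at least $\sigma_k$, in the sense of 1D topological persistence. Let $S = \bigcup_k S_k$; then $|S| = d_0(\varepsilon, \ell)$ is independent of $m$. The output $z$ consists of $x$ restricted to $S$, in order of position. The persistence hierarchy of a 1D signal can be constructed in $O(m \log m)$ time via a sweep with a priority queue that repeatedly cancels the least-persistent extremum pair, and the additional $\log m$ factor leading to $O(m \log^2 m)$ comes from the auxiliary ordered data structures needed to extract the top retained extrema at each scale.

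For correctness, both inequalities are shown by converting between couplings of $(x, y)$ and $(z, y)$. For $d_{dF}(z, y) \leq (1+\varepsilon) d_{dF}(x, y)$, I take an optimal coupling of $(x, y)$ with cost $\eta$, and restrict each block $B_j$ to $B_j \cap S$: at the scale $\sigma \approx \varepsilon \eta$, any oscillation of $x$ of amplitude exceeding $\varepsilon \eta$ is captured by some retained position, so the retained extrema approximate each block's true min and max to within an additive $\varepsilon \eta$. For $d_{dF}(z, y) \geq (1-\varepsilon) d_{dF}(x, y)$, I lift an optimal coupling of $(z, y)$ to a coupling of $(x, y)$: since unretained positions lie in monotone stretches bounded by neighboring retained values, any unretained $x_i$ can be matched to the same $y_j$ as an adjacent retained position at additional cost bounded by the persistence threshold at the relevant scale. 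The main obstacle will be the scale-selection analysis: the argument must guarantee that, simultaneously for every $y$, the relevant scale $\varepsilon \cdot d_{dF}(x, y)$ is adequately covered by one of the $K$ retained scales, and that summing $O(\ell/\varepsilon)$ retained positions over $O(\log(\ell/\varepsilon))$ scales still yields a constant $d_0(\varepsilon, \ell)$ independent of $m$, while avoiding double-counting of positions that are persistent at multiple scales.
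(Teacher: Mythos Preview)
Your block-decomposition starting point is correct and is essentially the paper's ``traversal sector'' observation, but the construction you propose has a genuine gap, and the paper takes a quite different route.

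\medskip

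\textbf{The scale gap.} You fix $K=O(\log(\ell/\varepsilon))$ geometric scales starting at $R=\max x-\min x$, so your finest scale is $\Theta\!\bigl(R\cdot(\varepsilon/\ell)^{O(1)}\bigr)$. But the scale you actually need is $\varepsilon\cdot d_{dF}(x,y)$, and $d_{dF}(x,y)$ can be as small as $\Delta$, the cost of the minimum-error $\ell$-simplification of~$x$. The ratio $R/\Delta$ is \emph{unbounded}: for $x=(0,R,0,\delta,0,\delta,\ldots)$ with many trailing oscillations and $\ell=3$ one has $\Delta=\delta/2$ while $R$ is arbitrary. Your hierarchy therefore never reaches scale $\varepsilon\delta$, and the only thing saving you is the ``top $O(\ell/\varepsilon)$ most persistent'' clause. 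But then the key claim in your upper-bound argument---``any oscillation of amplitude exceeding $\varepsilon\eta$ is captured by some retained position''---is simply false whenever there are more than $O(\ell/\varepsilon)$ oscillations at that amplitude. You flag exactly this as ``the main obstacle'' and do not resolve it; it is not a detail, it is the whole difficulty.

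\medskip

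\textbf{The lifting gap.} For the lower bound you assert that ``unretained positions lie in monotone stretches bounded by neighboring retained values''. Persistence-based selection does not give this. Cancelling low-persistence pairs and restricting to the surviving positions leaves the original values in between non-monotone in general: e.g.\ in $x=(0,10,4,5,4,5,4,10,0)$, at threshold~$2$ you retain the $0$'s and $10$'s, and the stretch $10,4,5,4,5,4,10$ between two retained positions is neither monotone nor bounded below by the neighboring retained values. So the lifting of a $(z,y)$-coupling to an $(x,y)$-coupling incurs additive error equal to the full amplitude of the skipped oscillations, not the persistence threshold at ``the relevant scale''.

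\medskip

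\textbf{What the paper does instead.} The paper sidesteps both issues by first computing $\Delta$ and rounding every entry of $x$ to a multiple of $\varepsilon\Delta$; since $\Delta\le d_{dF}(x,y)$ for all $y\in\RR^\ell$, this costs only a $(1\pm\varepsilon)$ factor and leaves $O(\ell/\varepsilon)$ distinct values. The reduction in complexity is then purely combinatorial: the paper defines the set of \emph{$\ell$-profiles} (the rank pairs of block min/max over all traversals), proves that two time series with the same value set and the same $\ell$-profile set have identical Fr\'echet distance to every $y\in\RR^\ell$, and argues non-constructively that since only finitely many $\ell$-profile sets exist over $O(\ell/\varepsilon)$ values, the shortest representative has length bounded by some $d_0(\varepsilon,\ell)$. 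The algorithm then brute-forces this representative. Notably, the paper obtains \emph{no explicit bound} on $d_0$, only its existence---so your target of an explicit $\tilde O(\ell/\varepsilon)$ would be a genuine strengthening, but the outline you give does not establish it.
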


Our idea for the dimension reduction can be described as follows.
As a first (simple) step, we show that one can reduce the number of distinct values appearing in a time series of complexity $m$ to $O(\ell/\eps)$ while maintaining the distance to any time series of complexity $\ell$ up to a factor of $(1+\eps)$. 

Then we observe that the discrete Fr\'echet distance between two time series $x$ and $y$, each of fixed complexity, can be written as a minimum over all traversals. Our goal is to describe the function minimizing over all traversals with a function minimizing over a much smaller set.
During each traversal, every value $y_i$ of $y$ is matched to a subsequence of $x$. In order to determine the Fr\'echet distance, it suffices to know the minimum and maximum value of $x$ matched to $y_i$. 
As it turns out, each traversal can  equivalently (but not uniquely!) be described by remembering a sequence of constraints that consist of the minimum and maximum value matched to each $y_i$. Furthermore, the function minimizing over the set of all possible traversals can likewise be described by minimizing over all possible ordered constraint sets. 
Since we have reduced the number of different values of the time series to $O(\ell/\eps)$ the number of different constraint sets is small. However, it still depends on the set of different values of the time series $x$.

To remove this dependence we remember for each traversal and each $y_i$ the \emph{rank} of the minimum and maximum value matched to it with respect to the (reduced) set of values of $x$. The resulting set of constraints 
will be called an $\ell$-profile (see Figure \ref{fig: example profile} for an example).
It is important to note that the set of all $\ell$-profiles
of a time series $x$ with values from a fixed set $X$
completely determines the Fr\'echet distance
to any time series of complexity $\ell$.
Since there are only a constant number  of different sets of $\ell$-profiles (where the constant depends on $\ell$ and $\eps$) for any time series $x$ with $O(\ell/\eps)$ distinct values, we can replace $x$ by the shortest time series $z$ over the same set of values that has the same set of $\ell$-profiles. 

What is the complexity of $z$?
We first observe that we can also replace any other time series $x'$ with the same set of $\ell$-profiles and the same number of distinct values with $z$, if we replace its values by the corresponding values from the set of values of $x'$.
Now, we can group all time series (of arbitrary length) according to their number of distinct values and set of profiles. For each group, the length of the shortest such time series is a constant that depends on the set of profiles and the number of distinct values of the time series. 
Since the number of profiles is also a constant depending on $\eps$ and $\ell$, the maximum length of these shortest time series is constant as well. Since the number of distance values also only depends on $\ell$ and
$\eps$, we know we can write this constant as a function of $\ell$ and $\eps$. We remark that we do not know any closed form or upper bound on this function.

\begin{remark}
In pursuit of a constructive bound, it is tempting to try a simple greedy pruning procedure that removes values from $x$ while maintaining the same set of $\ell$-profiles. While it is easy to make sure that no $\ell$-profile vanishes, we do not know how to avoid that new $\ell$-profiles are created using such a procedure.
\end{remark}

\subsection{How to apply the approximation scheme of  Cohen-Addad et al. \cite{CFS21} }

After the dimension reduction, we obtain a space that has reduced (and bounded) doubling dimension and we would like to apply 
a linear time approximation scheme by Cohen-Addad et al. \cite{CFS21} on the dimension-reduced time series. Their algorithm allows to specify a distinct set of clients (points to be clustered) and facilities (points where one may put a center). This is good for us, since we need to ensure a bound on the complexity of centers chosen by the algorithm. However, their algorithm is linear in the size of the union of both sets. In our case, the set of time series of complexity $\ell$ is unbounded. We therefore need to construct a \emph{center set} $F$ that is sufficiently small and that contains a $(1+\eps)$-approximation.
To do so we utilize a technique by Filtser et al. \cite{FFK23} that was developed for the context of approximate nearest neighbor data structures.

\subsection{Additional Results}
Our dimension reduction can also be used to improve coreset constructions for the $(k,\ell)$-median problem. For example we can combine the dimension reduction with the result of Cohen-Addad et al.~\cite{CDRSS25} to obtain the first coreset for the $(k,\ell)$-median problem of time series under the discrete Fr\'echet distance whose size is completely independent of the input.

\section{Preliminaries}
\label{section:preliminaries}
We will represent a time series of length $m$ as an $m$-dimensional vector $x \in \RR^m$ over the reals.
We also refer to $m$ as the \emph{complexity} of the time series. For a positive integer $r$ we will use $[r]$ to denote the set $\{1,2,\dots,r\}$.
To define the discrete Fréchet distance  between two time series $x,y$ we first introduce the notion of traversals, which can be seen as an alignment between the vertices of $x$ and $y$ that satisfies certain properties. 
 
\begin{definition}
Given two time series $x = (x_1,\dots,x_{m}) \in \RR^{m}$ and $y = (y_1,\dots,y_{\ell}) \in \RR^{\ell}$ a \emph{traversal} $T$  is a sequence of index pairs $(i,j) \in [m]\times[\ell]$ with the following properties:
\begin{enumerate}
\item $T$ starts with pair $(1,1)$
\item $T$ ends with pair $(m,\ell)$ 
\item if a pair $(i,j)$  appears in $T$, it can only be followed by either $(i,j+1)$, $(i+1,j)$ or $(i+1,j+1)$
\end{enumerate}
If a pair $(i,j)$ appears in $T$, we say that $x_i$ and $y_j$ are \emph{matched} in traversal $T$.
\end{definition}

Next, we will introduce a matrix based notation for traversals that we find useful for presenting our results.
 In order to do so, observe that traversals effectively extend both time series individually to some common length by repeating arbitrary vertices. This extension mechanism can also be realized by a matrix multiplication, i.e. given a time series $x$ of complexity $\ell$ and a suitable matrix $M \in \{0,1\}^{t\times \ell}$ we obtain a new time series $M \cdot x$ of complexity $t$. We will call these matrices \emph{traversal matrices}.
Thus, a traversal matrix is a matrix that maps an $\ell$-dimensional vector $v$ in a $t$-dimensional space by copying some of the entries of $v$ while keeping their relative order.
For a matrix $M$ we will use $M(i,j)$ to denote the entry in
row $i$ and column $J$. We define traversal matrices as follows.

\begin{definition}[traversal matrices]
$M \in \{0,1\}^{t \times \ell}$ is a traversal matrix, if it satisfies the following properties:

\begin{enumerate}
    \item The rows of $M$ are standard unit vectors,
    \item $M(1,1) = 1$,
    \item $M(t,\ell) = 1$, and
    \item For rows $i \in [t-1]$ and columns $ j \in [\ell-1]$ it holds that if $M(i,j) = 1$ then either $M(i+1,j+1) = 1$ or $M(i+1,j) = 1$. Furthermore, if $i\in [t-1]$ and $j=\ell$ then $M(i+1,j)=1$.
\end{enumerate}
\end{definition}

For $t,\ell \in \NN$ let $\mathcal{M}^t_\ell
\subset \{0,1\}^{t\times m}$ be the set of all $(t,\ell)$-dimensional traversal matrices.
When the dimensions $t$ and $\ell$ are clear from the context, we will not mention $\mathcal{M}^t_{\ell}$ explicitly. Observe that with the above definition, we always have $t \ge \ell$.
We say that a pair of traversal matrices 
$(M_m, M_\ell)$
has \emph{matching dimension}
for two time series $x\in \mathbb R^m$ and $y\in \mathbb R^\ell$, if
there is $t \in \mathbb N$ such that $t\ge\max(m,\ell)$,
$M_m \in \mathcal M_{m}^t$, and $M_\ell\in \mathcal M_{\ell}^t$. If
$x$ and $y$ are clear from the context, we also just say that $(M_m,M_\ell)$ has matching dimension.
We observe that any traversal of $x=(x_1,\dots, x_m)$ and $y=(y_1,\dots, y_\ell)$ can be represented as a pair of traversal matrices
$(M_m, M_\ell)$ of matching dimension. 
Furthermore, 
note that the $t$-dimensional vectors 
$M_m \cdot (1,2,\dots, m)^T$ and $M_\ell \cdot (1,2,\dots, \ell)^T$
can be used to denote the indices of the entries
of $x$ and $y$ that are mapped to the corresponding 
entries in the $t$-dimensional vectors $M_m x$ and $M_\ell y$, respectively. This way 
we can identify $(M_m, M_\ell)$ with a sequence $T'$ of index pairs whose $r$-th entry
is the pair $(i,j)$ where $i$
is the $r$-th entry of 
$M_m \cdot (1,2,\dots, m)^T$ and $j$ is the $r$-th entry of $M_\ell \cdot (1,2,\dots, \ell)$.
If we remove multiple neighboring occurences of pairs $(i,j)$ from $T'$ the resulting sequence $T$
is a traversal. Hence, 
we can identify any pair
$(M_m,M_\ell)$ of matrices of matching dimension with a traversal.
We can therefore also say
that $x_i$ and $y_j$
are matched by $(M_m,M_\ell)$, if $(i,j)$
appears in the corresponding traversal.
We can then define the Fr\'echet distance in the following way. 

\begin{definition}
Given time series $x \in \RR^m,y \in \RR^{\ell}$ the 
 \emph{discrete Fréchet distance} 
 $d_{dF}(x,y)$ between $x$ and $y$ is defined as
    \[d_{dF}(x,y) = \min_{(M_m,M_\ell)} \left \Vert M_m x - M_\ell  y \right \Vert_\infty,\]
where the minimum is over all pairs of traversal matrices $(M_m,M_\ell)$
of matching dimension.
\end{definition}

\begin{figure}[t!]
    \begin{subfigure}[t]{0.48\textwidth}
    \centering
       \includegraphics[ page = 1]{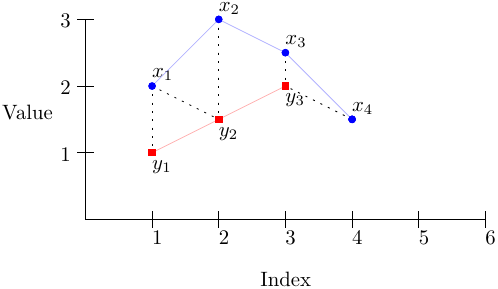}
        \caption{
        }
         \label{fig:1.a}
    \end{subfigure} 
    \begin{subfigure}[t]{0.48\textwidth}
    \centering
        \includegraphics[page = 2]{figures/traversal_matrix_example.pdf}
        \caption{
        }
         \label{fig:1.b}
    \end{subfigure}
    \caption{Subfigure \ref{fig:1.a} shows two time series $x \in \RR^4$ and $y \in \RR^3$ matched by some traversal with corresponding traversal matrices $(M_4,M_3)$. The dotted lines indicate which vertices are matched by the traversal. Subfigure \ref{fig:1.b} shows $x':= M_4 \cdot x$ and $y':= M_3 \cdot y$.}
\end{figure}

\begin{example}
Consider two time series $x\in \RR^4$, $y\in \RR^3$ and a traversal with corresponding traversal matrices \[M_4 = \begin{pmatrix}
    1 & 0 & 0 & 0\\
    1 & 0 & 0 & 0\\
    0 & 1 & 0 & 0\\
    0 & 0 & 1 & 0\\
    0 & 0 & 0 & 1\\
\end{pmatrix},
M_3 = \begin{pmatrix}
    1 & 0 & 0 \\
    0 & 1 & 0 \\
    0 & 1 & 0 \\
    0 & 0 & 1 \\
    0 & 0 & 1 \\
\end{pmatrix}\]  as depicted in Figure \ref{fig:1.a}.  Figure \ref{fig:1.b} shows the two time series $x$ and $y$ after applying the traversal matrices, i.e. $x':= M_4\cdot x$ and $y':= M_3 \cdot y$.
\end{example}

\subsection{The $(k,\ell)$-median clustering problem}

In this paper, we consider the $(k,\ell)$-median problem for clustering under the discrete Fr\'echet distance that has been introduced in \cite{DKS16} in the context of the continuous Fr\'echet distance. 
The problem is a variant of the $k$-median problem under Fr\'echet distance, where the complexity of the center time series is restricted to be at most $\ell$. This restriction is motivated by the fact that otherwise an optimal center time series may be as large as $\Omega(nm)$, where $n$ is the number of input time series and $m$ is their complexity.
\begin{problem}[($k,\ell)$-median clustering]
   Given a set of time series $P \subset \RR^m$ and parameters $k, \ell \in \NN$, compute a set $C$ of $k$ time series  
   of complexity at most $\ell$
   that minimizes
    \[\sum_{x \in P}\min_{c \in C} d_{dF}(x,c).\] 
\end{problem}
We also remark that one may also define the center time series to have complexity exactly $\ell$ without changing the problem, since every time series with complexity fewer than $\ell$ can be extended to a time series with $\ell$ vertices by repeating the first element of the time series. This does not affect the Fr\'echet distance.

\subsection{Minimum error simplification}

In our paper we use the concept of minimum-error $\ell$-simplification for some time series $x \in \RR^m$, which can be computed in $O(\ell  m  \log(m)  \log(m/\ell))$ time as in \cite{bereg2008simplifying} and is defined as follows.

\begin{definition}[minimum-error $\ell$-simplification]
For a time series $x \in \RR^m$ a time series $\tilde{x} \in \RR^{\ell}$ is a minimum-error $\ell$-simplification of $x$ if for any time series $y \in \RR^{\ell}$ it holds that $d_{dF}(x,\tilde{x}) \leq d_{dF}(x,y)$.
\end{definition}

\section{Dimension reduction}\label{sec:dimreduction}
In this section, we will establish our dimension reduction result. In a first step, we will show that one can quantize the entries of a time series from $\mathbb R^m$
to $O(\ell/\varepsilon)$ distinct values, while  guaranteeing that the distance to any time series of complexity $\ell$ is preserved up to a factor of $(1\pm \varepsilon)$.

\begin{algorithm}
\caption{\label{alg: ReduceValueDomain}}
\begin{algorithmic}[1]

\Procedure {\texttt{ReduceValueDomain}}{$x \in \RR^m$, $\ell \in \NN$, $\varepsilon \in \RR_{> 0}$}

\State Let $\tilde{x}$ be a minimum-error $\ell$-simplification of $x$

\State $\Delta \gets d_{dF}(x,\tilde{x})$

\State Obtain $x'$
from $x$ by rounding up the entries of $x$ to the next multiple of $\varepsilon\Delta$ 

\State \Return $x'$
\EndProcedure
\end{algorithmic}
\end{algorithm}

\begin{restatable}{lemma}{lemreducevaluedomain}\label{lem: reduce value domain}
Let $x \in \RR^m$, $\ell \in \NN$ and $1\ge\varepsilon>0$. Algorithm \ref{alg: ReduceValueDomain} computes in time $O(\ell m  \log(m)  \log(m/\ell))$ a time series $x' \in X^m$ with $X \subset \RR$ and $|X| \in O(\ell /\varepsilon)$ s.t. for every $y \in \RR^\ell$, \[ (1-\varepsilon) d_{dF}(x,y) \le d_{dF}(x',y) \le (1+ \varepsilon)  d_{dF}(x,y).\]
\end{restatable}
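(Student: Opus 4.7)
The plan is to verify three things: the running time, the bound $|X|\in O(\ell/\varepsilon)$, and the two-sided distance approximation.

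For the running time, the call to compute the minimum-error $\ell$-simplification dominates at $O(\ell m \log m \log(m/\ell))$ by the cited result of Bereg et al. Computing $\Delta = d_{dF}(x,\tilde x)$ takes $O(\ell m)$ via the standard dynamic program, and the final rounding is $O(m)$, so the total matches the claim.

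For the bound on the value set, I would use the simplification $\tilde x \in \RR^\ell$ and the definition of $\Delta$ to localize the entries of $x$. Since $d_{dF}(x,\tilde x)=\Delta$, there is a traversal matching every $x_i$ to some $\tilde x_{j(i)}$ with $|x_i-\tilde x_{j(i)}|\le \Delta$. Thus every entry of $x$ lies in the union $U=\bigcup_{j=1}^\ell [\tilde x_j-\Delta,\tilde x_j+\Delta]$ of $\ell$ closed intervals of length $2\Delta$. Rounding an interval of length $2\Delta$ up to the next multiple of $\varepsilon\Delta$ produces at most $\lceil 2/\varepsilon\rceil+1$ distinct values, so $|X|\le \ell(\lceil 2/\varepsilon\rceil+1) = O(\ell/\varepsilon)$.

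For the distortion bound, the key observation is that the identity traversal witnesses $d_{dF}(x,x')\le \varepsilon\Delta$, because rounding up to the next multiple of $\varepsilon\Delta$ changes each coordinate by at most $\varepsilon\Delta$, so $\|x-x'\|_\infty \le \varepsilon\Delta$. The crucial point is now that $\Delta$ is a \emph{lower bound} on $d_{dF}(x,y)$ for every $y\in\RR^\ell$: by the defining property of the minimum-error $\ell$-simplification, $\Delta = d_{dF}(x,\tilde x)\le d_{dF}(x,y)$. Combining this with the triangle inequality for $d_{dF}$ (which holds for time series of equal complexity, and here $x,x'\in\RR^m$) gives
\[
d_{dF}(x',y)\le d_{dF}(x,y)+d_{dF}(x,x') \le d_{dF}(x,y)+\varepsilon\Delta \le (1+\varepsilon)d_{dF}(x,y),
\]
and symmetrically
\[
d_{dF}(x,y)\le d_{dF}(x',y)+d_{dF}(x,x')\le d_{dF}(x',y)+\varepsilon d_{dF}(x,y),
\]
which rearranges to $(1-\varepsilon)d_{dF}(x,y)\le d_{dF}(x',y)$.

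The only step that is not essentially automatic is the value-set bound, and even there the main idea (localize the entries of $x$ to $\ell$ intervals around the simplification) is standard; the rest is triangle inequality combined with the optimality of $\tilde x$. So the main conceptual ingredient is that the simplification provides both (i) the correct scale $\Delta$ to round at and (ii) a lower bound on $d_{dF}(x,y)$ uniformly in $y\in\RR^\ell$, which is exactly what is needed to convert an additive error of $\varepsilon\Delta$ into a multiplicative $(1\pm\varepsilon)$ guarantee.
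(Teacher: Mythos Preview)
Your proposal is correct and follows essentially the same approach as the paper: localize the entries of $x$ to $\ell$ intervals of length $O(\Delta)$ around the simplification to bound $|X|$, use the identity alignment to get $d_{dF}(x,x')\le\varepsilon\Delta$, and combine the triangle inequality with the optimality of $\tilde x$ (which gives $\Delta\le d_{dF}(x,y)$ for all $y\in\RR^\ell$) to convert the additive error into a multiplicative one. One small remark: your parenthetical about the triangle inequality holding ``for time series of equal complexity'' is unnecessary---$d_{dF}$ is a pseudo-metric across all complexities, and indeed your inequality $d_{dF}(x',y)\le d_{dF}(x,y)+d_{dF}(x,x')$ already involves $y\in\RR^\ell$ with $\ell\neq m$.
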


\begin{proof}
   The minimum-error $\ell$-simplification $\tilde x$ of $x$ can be computed in $O(\ell m \log (m) \log (m/\ell))$ time using an algorithm from 
\cite{bereg2008simplifying}
and the distance $d_{dF}(x,\tilde x)$ can be computed in $O(\ell m)$ time.
 The remaining steps are linear $m$ and hence the running time follows.
Now let $\tilde{x}$ be the computed minimum-error $\ell$-simplification of $x$ and let $\Delta = d_{dF}(x,\tilde{x})$
. 
    For $1 \leq j \leq \ell$ we define $S_j = [ \tilde{x}_j - \Delta, \tilde{x}_j + (1+\varepsilon)\Delta ]$. Then it holds for all $1 \leq i \leq m$ that there is $1 \leq j \leq \ell$ s.t. $x_i \in S_j$. It follows that the number of distinct values of $x'$
    is $O(\ell/\varepsilon)$. 
By the triangle inequality we have 
$
d_{dF}(x',y) \le d_{dF}(x',x) + d_{dF}(x,y)
$ and 
$d_{dF}(x',y) \ge d_{dF}(x,y) - d_{dF}(x',x)$.
By the trivial alignment we have
$d_{dF}(x',x) \le \epsilon \Delta$. The lemma follows from the fact that by definition of a minimum-error $\ell$-simplification we have $\Delta = d_{dF} (\tilde{x},x) \leq d_{dF}(y,x)$
    for every $y \in \RR^\ell$.
\end{proof}

In our dimension reduction we are interested in maintaining 
the discrete Fr\'echet distance of a time series $x\in \RR^m$
to every time series of length at most $\ell$. Thus, we can use the previous lemma to reduce the number of distinct values of every fixed time series to $O(\ell/\varepsilon)$. We will 
therefore focus in the remainder of this chapter on such time series and exploit this property for our dimension reduction.

For $x\in \RR^m$, $x= (x_1,\dots, x_m)$, define
$\vectorset(x) = \{x_i : 1\le i \le m\}$.
For $x \in \RR^m$ and $z\in \vectorset(x)$ let $
\rank_x(z)$ be the rank of $z$ in $\vectorset(x)$, the set  of entries of $x$.
Note that one can compute the rank of all elements in $\vectorset(x)$ in $O(|\vectorset(x)|\log |\vectorset(x)|)$ time. 

\begin{definition}{(traversal sectors)}
Let $x \in \RR^m$, $y \in \RR^\ell$ and $M = (M_m,M_\ell)$ be a pair of traversal matrices with matching dimension.
For $j\in [\ell]$ 
we define $S^{(x,M)}_j := \{x_i ~|~ i \in [m] \text{ and $M$ matches $x_i$ and $y_j$ }\}.$
We call the sequence $(S^{(x,M)}_1,\dots, S^{(x,M)}_\ell)$ the traversal sectors of $x$ and $M$.
\end{definition}
Furthermore,  $(S_1,\dots, S_\ell)$ are called  traversal sectors of $x$, if there exists a pair of traversal matrices $M$ with traversal sectors $(S_1,\dots, S_\ell)$. 
We observe that for a given pair $M=(M_m, M_\ell)$ of
traversal matrices we get
\[\Vert M_m x - M_\ell y\Vert_\infty = \max_{1 \leq i \leq \ell} \max_{z \in  S^{(x,M)}_i} |z - y_i| = \max_{1 \leq i \leq \ell} \max\{|\min(S^{(x,M)}_i) - y_i|, | \max (S^{(x,M)}_i) - y_i|\}.\]

Thus, to determine the Fr\'echet distance between time series $x$ and $y$ it suffices to consider the minimum and maximum value in each traversal sector. This will be used in the following definition.

\begin{definition}[$\ell$-profile]
Let $\ell \in \NN$ and 
and $x\in \RR^m$.
For an arbitrary $y\in \RR^\ell$ and
any pair of traversal matrices $M = (M_m,M_\ell)$
of matching dimension we call the sequence 
\[ 
\big(\rank_x(\min (S^{(x,M)}_1)), \rank_x(\max (S^{(x,M)}_1))\big), \dots, \big(\rank_x(\min (S^{(x,M)}_\ell)), \rank_x(\max (S^{(x,M)}_\ell))\big)
\] the $\ell$-profile of $(x,M)$. 
\end{definition}

\begin{figure}[H]
    \centering
    \includegraphics[width=0.48\linewidth]{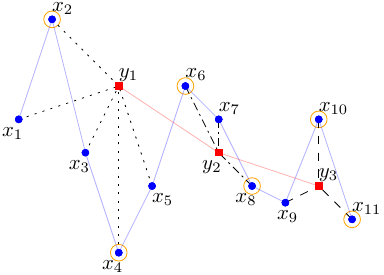}
    \caption{Depicted are two time series $x$ and $y$ and the matched vertices through some pair of traversal matrices $M=(M_{11},M_{3})$. The traversal sectors of $(x,M)$ are $S^{(x,M)}_1 = \{x_1,x_2,x_3,x_4,x_5\}, S^{(x,M)}_2 =\{x_6,x_7,x_8\}$ and $S^{(x,M)}_3 = \{x_9,x_{10},x_{11}\}$. The orange highlighted vertices are the extrema values located in each traversal sector. The ranks of the values  $(x_2,x_4),(x_6,x_8),(x_{10},x_{11})$ define the $3$-profile of $(x,M)$. }
    \label{fig: example profile}
\end{figure}

See Figure \ref{fig: example profile}
for an example of traversal sectors and $\ell$-profile.
Let $
D_{(x,\ell)}$ denote the set of all $\ell$-profiles of a time series $x$
over all pairs of traversal matrices with matching dimension for time series of complexity $m$ and time series of complexity $\ell$. 

Since the information stored in an $\ell$-profile of some time series $x$ and pair of traversal matrices $M = (M_m,M_\ell)$ preserves $\Vert M_m x - M_\ell y\Vert_\infty$ for any $y \in \RR^\ell$ we can argue that two time series $x$ and $x'$ with $\vectorset(x) = \vectorset(x')$
that have the same set of $\ell$-profiles are interchangeable w.r.t. the discrete Fréchet distance. This is stated in the following lemma.

\begin{restatable}{lemma}{lemmatchingprofiles}\label{lem: matching profiles}
Let $m,m',\ell \in \NN$.
Let $x\in \RR^{m}$ and $x'\in \RR^{m'}$ be two time series with $\vectorset(x) = \vectorset(x')$ 
and with the same set of $\ell$-profiles, i.e. 
$D_{(x,\ell)} = D_{(x',\ell)}$.
 Then for every $y \in \RR^\ell$ we have $d_{dF}(x,y) = d_{dF}(x',y).$
\end{restatable}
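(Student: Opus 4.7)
The plan is to show that the Fréchet distance $d_{dF}(x,y)$ can be expressed as a minimum over $\ell$-profiles in $D_{(x,\ell)}$, with the value achieved by each profile depending only on $\vectorset(x)$, on the profile itself, and on $y$. Since $x$ and $x'$ agree on both the value set and the set of $\ell$-profiles, this minimum will coincide.

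First I would combine the identity highlighted just before the definition of $\ell$-profile,
\begin{equation*}
\Vert M_m x - M_\ell y\Vert_\infty = \max_{1 \leq j \leq \ell} \max\bigl\{|\min(S^{(x,M)}_j) - y_j|,\; |\max (S^{(x,M)}_j) - y_j|\bigr\},
\end{equation*}
with the observation that because $\vectorset(x) = \vectorset(x')$ the rank functions $\rank_x$ and $\rank_{x'}$ are the same bijection between this common value set and $\{1,\dots,|\vectorset(x)|\}$. Consequently, given any $\ell$-profile $p = \bigl((a_1,b_1),\dots,(a_\ell,b_\ell)\bigr)$, one can unambiguously recover for each coordinate $j$ the pair of real values $(\mu_j(p), \nu_j(p))$ (namely the elements of $\vectorset(x)=\vectorset(x')$ of ranks $a_j$ and $b_j$). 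Using the identity above, the distance realized by any pair of traversal matrices $M$ for $(x,y)$ whose $\ell$-profile equals $p$ is exactly
\begin{equation*}
f(p,y) := \max_{1 \leq j \leq \ell} \max\bigl\{|\mu_j(p) - y_j|,\; |\nu_j(p) - y_j|\bigr\},
\end{equation*}
which depends only on $p$, the common value set, and $y$, and not on $x$ itself. The same expression with the same $f(p,y)$ is obtained from any $M'$ for $(x',y)$ whose profile equals $p$.

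Next I would show both directions of equality. For one direction, pick a pair $M$ of traversal matrices with matching dimension for $(x,y)$ that realizes $d_{dF}(x,y)$; let $p$ be its $\ell$-profile, so $p \in D_{(x,\ell)} = D_{(x',\ell)}$. By the definition of $D_{(x',\ell)}$ there exists some pair of traversal matrices $M'$ with matching dimension for $(x',y)$ whose $\ell$-profile is also $p$. By the previous paragraph,
\begin{equation*}
d_{dF}(x',y) \;\leq\; \Vert M'_{m'} x' - M'_\ell y\Vert_\infty \;=\; f(p,y) \;=\; \Vert M_m x - M_\ell y\Vert_\infty \;=\; d_{dF}(x,y).
\end{equation*}
The reverse inequality follows by the symmetric argument, interchanging the roles of $x$ and $x'$ (using $D_{(x',\ell)} \subseteq D_{(x,\ell)}$), which yields the claim.

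There is no real obstacle; the only subtle point to make explicit is that the rank of a value depends only on the underlying value set, so the translation profile-to-values is canonical once $\vectorset(x) = \vectorset(x')$ is assumed. Without this hypothesis, identical profiles would not imply identical min/max values in each sector and the argument would fail.
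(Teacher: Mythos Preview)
Your proposal is correct and follows essentially the same approach as the paper: both arguments use the identity expressing $\|M_m x - M_\ell y\|_\infty$ via the sector minima and maxima, observe that equal value sets make the rank-to-value translation canonical so that matching profiles yield matching sector extrema, pick an optimal traversal on one side, transfer its profile to the other side via $D_{(x,\ell)} = D_{(x',\ell)}$, and conclude by symmetry. Your abstraction $f(p,y)$ and your explicit remark on why the hypothesis $\vectorset(x)=\vectorset(x')$ is needed make the dependence structure a bit more transparent, but the underlying proof is the same.
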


\begin{proof}       

    Since $D_{(x,\ell)} = D_{(x',\ell)}$, for any pair of traversal matrices $M = (M_m,M_\ell)$ there is a pair of traversal matrices $M'= (M'_{m'},M'_\ell)$  s.t. for all $i \in [\ell]$,  $\max(S^{(x,M)}_i) = \max(S^{(x',M')}_i)$ and $\min(S_i^{(x,M)}) = \min(S^{(x',M')}_i)$ and vice versa.

    Consider some arbitrary $y \in \RR^\ell$ and let $M = (M_{m},M_\ell)$ be a pair of traversal matrices s.t. $\Vert M_{m} x - M_\ell y\Vert_\infty = d_{dF}(x,y)$.
Let $M' = (M'_{m'},M'_\ell)$ be a pair of traversal matrices for $x'$ and $y$ such that $(x',M')$ has the same $\ell$-profile as $(x,M)$.

    Then we get
    \begin{align*}
    d_{dF}(x,y) &= \Vert M_{m} x - M_\ell y \Vert_\infty \\
    &= \max_{i \in [\ell]} \max \{|\min(S^{(x,M)}_i) - y_i|, | \max(S^{(x,M)}_i) - y_i|\}\\
    &= \max_{i \in [\ell]} \max \{|\min(S^{(x',M')}_i) - y_i|, | \max(S^{(x',M')}_i) - y_i|\}\\
    &= \Vert M_{m'}' x' - M_\ell' y \Vert_\infty \\
    &\ge d_{dF}(x',y).
    \end{align*}

 For the other direction let $M'=(M_{m'}',M_\ell')$ be a pair of traversal matrices s.t. $\Vert M_{m'}' x' - M_\ell' y\Vert_\infty = d_{dF}(x',y)$. 
    Let $M = ( M_{m}, M_\ell)$ be a pair of traversal matrices such that $(x,M)$ has the same $\ell$-profile as $(x',M')$. Then we get
    \begin{align*}
        d_{dF}(x',y) &= \Vert M_{m'}' x' - M_\ell' y \Vert_\infty \\
                 &= \max_{i \in [\ell] } \max \{|\min(S^{(x',M')}_i) - q_i|, | \max(S^{(x',M')}_i) - q_i|\}\\
                 &= \max_{i \in [\ell] } \max \{|\min(S^{(x,M)}_i) - y_i|, | \max(S^{(x,M)}_i) - y_i|\}\\
                 &= \Vert M_{m} x - M_\ell y\Vert_\infty\\
                 &\geq d_{dF}(x,y),
    \end{align*}
    which concludes the proof.
\end{proof}

Observe that the profiles of a time series only encode the ranks of the values of the time series. Thus, whenever two time series have the same sequence of ranks they have the same set of profiles. This is stated in the following lemma.

\begin{restatable}{lemma}{lemsameprofileset}
\label{lemma:sameprofileset}
Let $X=\{x_1,\dots, x_{t}\}$ be a set with $x_1< x_2< \dots < x_t$. Let $x=(x_{i_1},\dots, x_{i_m})$ be a time series whose values are taken from $X$ and that satisfies $\vectorset(x) = X$.
Consider a different set $Y=\{y_1,\dots, y_{t}\}$ with 
$y_1< y_2 < \dots < y_t$. 
Let $y=(y_{i_1},\dots, y_{i_m})$ be a time series whose values are taken from $Y$ and that satisfies $set(y) = Y$. Then it holds for any $\ell \in \NN$ that $D_{(x,\ell)} = D_{(y,\ell)}$.
\end{restatable}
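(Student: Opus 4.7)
The plan is to exploit the fact that the set of traversal matrix pairs $(M_m,M_\ell)$ of matching dimension depends only on $m$ and $\ell$, and not on the underlying values of the time series. Thus the same $(M_m,M_\ell)$ is applicable to both $x$ and $y$, and it suffices to show that for every such pair $M$, the $\ell$-profile of $(x,M)$ equals the $\ell$-profile of $(y,M)$. Symmetry of the argument then yields $D_{(x,\ell)} = D_{(y,\ell)}$.

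To prove equality for a fixed $M=(M_m,M_\ell)$, I will first observe that since $X=\{x_1,\dots,x_t\}$ is listed in increasing order and equals $\vectorset(x)$, we have $\rank_x(x_r)=r$ for every $r \in [t]$; likewise $\rank_y(y_r)=r$. Next, by definition, for each $j\in [\ell]$ the traversal sector satisfies
\[
S^{(x,M)}_j = \{\, x_{i_k} : k \in K_j\,\}, \qquad \text{where } K_j := \{\,k \in [m] : M \text{ matches position } k \text{ of } x \text{ with position } j\,\}.
\]
Because $x_1<x_2<\dots<x_t$, the map $r \mapsto x_r$ is strictly increasing, so
\[
\min(S^{(x,M)}_j) = x_{r^{\min}_j}, \qquad \max(S^{(x,M)}_j) = x_{r^{\max}_j},
\]
where $r^{\min}_j := \min\{i_k : k\in K_j\}$ and $r^{\max}_j := \max\{i_k : k\in K_j\}$. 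The indices $r^{\min}_j,r^{\max}_j$ depend only on the sequence $(i_1,\dots,i_m)$ and on $M$, not on the values of $X$. Therefore $\rank_x(\min S^{(x,M)}_j) = r^{\min}_j$ and $\rank_x(\max S^{(x,M)}_j) = r^{\max}_j$.

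Applying the exact same argument to $y$, which uses the identical index sequence $(i_1,\dots,i_m)$, and noting that $y_1<y_2<\dots<y_t$, we obtain
\[
\rank_y(\min S^{(y,M)}_j) = r^{\min}_j, \qquad \rank_y(\max S^{(y,M)}_j) = r^{\max}_j,
\]
for every $j \in [\ell]$. Hence the $\ell$-profile of $(x,M)$ coincides with that of $(y,M)$, so each profile of $x$ is realized by $y$ via the same matrix pair, giving $D_{(x,\ell)} \subseteq D_{(y,\ell)}$. By symmetry the reverse inclusion holds as well, concluding the proof.

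I do not expect any real obstacle here; the argument is essentially bookkeeping. The only subtle point worth being careful about is disentangling the two layers of indices involved: the position indices in the time series on one hand, and the value indices into the sorted sets $X$ and $Y$ on the other. Once one uses that the ranks are simply the value indices (thanks to the sorted listing of $X$ and $Y$), the identical role of the shared index sequence $(i_1,\dots,i_m)$ for both $x$ and $y$ delivers the conclusion directly.
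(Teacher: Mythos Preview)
Your proof is correct and follows essentially the same approach as the paper's: fix a pair of traversal matrices $M$, use that $\rank_x(x_{i_j})=i_j=\rank_y(y_{i_j})$ for every position $j$, and conclude that the $\ell$-profile of $(x,M)$ equals that of $(y,M)$, giving both inclusions by symmetry. If anything, your version is slightly more explicit than the paper's in isolating the index sets $K_j$ and the extremal indices $r^{\min}_j,r^{\max}_j$, making the order-preserving bijection $r\mapsto x_r$ (resp.\ $r\mapsto y_r$) do the work transparently.
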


\begin{proof}
Consider an arbitrary $\ell$-profile $p \in D_{(x,\ell)}$ then by definition of $p$ there exists a pair of traversal matrices $M$ and $(S^{(x,M)}_1,\dots, S^{(x,M)}_\ell)$ s.t. $p$ equals
\[
\big( (\rank_x(\min(S^{(x,M)}_1)), \rank_x(\max(S^{(x,M)}_1))), \dots, ( \rank_x(\min(S^{(x,M)}_\ell), \rank_x(\max(S^{(x,M)}_\ell))\big).
\]
As $y$ has the same complexity as $x$ we know that 
$(S^{(y,M)}_1,\dots, S^{(y,M)}_\ell)$
are traversal sectors. By the assumptions of $x$ and $y$, $\rank_x(x_{i_j}) = \rank_y(y_{i_j})$ for $j \in [m]$, which implies $\rank_x(\min(S^{(x,M)}_k)) = \rank_y(\min(S^{(y,M)}_k)$ and $\rank_x(\max(S^{(x,M)}_k)) = \rank_y(\max(S^{(y,M)}_k)$, for $k \in [\ell]$. Thus, $p$ is an $\ell$-profile for $(y,M)$ and $p \in D_{(y,\ell)}$.
The other direction is analogous, which concludes the proof.
\end{proof}

We can now show that for every time series $x$ with $|\vectorset(x)| \leq r$ and $\ell \in \NN$ there is a common complexity $f(r,\ell)$ such that there is a time series $y$ of complexity at most $f(r,\ell)$ that has the same set of $\ell$-profiles as $x$.
\begin{restatable}{lemma}{lemexistencematchingprofiles}\label{lem: existence matching profiles} 
There exists a function $f:\NN \times \NN \rightarrow \NN$ such that for every 
$m \in \NN$, every
$x\in \RR^m$ with $|\vectorset(x)| \le r$ and every $\ell \in \NN$, there exists $y \in \RR^{f(r,\ell)}$ such that for every $q\in \RR^\ell$ we have $D_{(y,\ell)} = D_{(x,\ell)}$, $set(x) = set(y)$ and, in particular,
$d_{dF}(x,q) = d_{dF}(y,q)$.  
\end{restatable}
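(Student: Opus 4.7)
The plan is to reduce the statement to a purely combinatorial claim about rank sequences and then invoke a finite-universe argument to define $f(r,\ell)$. First, I would pass from $x$ to its rank sequence $\hat x\in[s]^m$, where $s:=|\vectorset(x)|\le r$ and $\hat x_i := \rank_x(x_i)$. By Lemma~\ref{lemma:sameprofileset}, applied to the order-preserving bijection between $\vectorset(x)$ and $[s]$, we have $D_{(\hat x,\ell)}=D_{(x,\ell)}$, so it suffices to produce a short rank time series over $[s]$ with the same profile set and using all of $[s]$.

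Next I would observe that everything now lives in a finite universe. Any $\ell$-profile of a time series over $[s]\subseteq[r]$ is a length-$\ell$ sequence of pairs in $[r]\times[r]$, so there are at most $r^{2\ell}$ possible profiles and at most $2^{r^{2\ell}}$ possible profile sets. For each $s\in[r]$ and each profile set $S$ realized by some $\hat y\in[s]^*$ with $\vectorset(\hat y)=[s]$ and $D_{(\hat y,\ell)}=S$, let $L(s,S)$ be the minimum such length, and set
\[
  f(r,\ell) \;:=\; \max\{\,L(s,S)\,:\, s\in[r],\ S\text{ realizable over }[s]\,\}.
\]
This maximum is over finitely many finite positive integers, hence is itself a finite positive integer. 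Applied to the pair $(s,D_{(\hat x,\ell)})$, which is realizable by $\hat x$ itself, it produces a witness $\hat y\in[s]^{L(s,D_{(\hat x,\ell)})}$ of length at most $f(r,\ell)$ with $\vectorset(\hat y)=[s]$.

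To meet the exact length required by the statement, I would pad $\hat y$ to length $f(r,\ell)$ by repeating its final entry and verify that such duplication leaves $D_{(\hat y,\ell)}$ unchanged: any traversal of $\hat y$ extends to the padded time series by matching the new copy of the last vertex to the last coordinate of the query, and conversely any traversal matrix pair for the padded time series contracts to one for $\hat y$ by merging its last two columns, which preserves every traversal sector as a set of values and hence every profile. Finally, I would transfer the padded $\hat y$ back to $\RR$ via the order-preserving bijection $[s]\to\vectorset(x)$, producing $y\in\RR^{f(r,\ell)}$ with $\vectorset(y)=\vectorset(x)$. Lemma~\ref{lemma:sameprofileset} again gives $D_{(y,\ell)}=D_{(\hat y,\ell)}=D_{(x,\ell)}$, and Lemma~\ref{lem: matching profiles} then yields $d_{dF}(x,q)=d_{dF}(y,q)$ for every $q\in\RR^\ell$. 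The main obstacle is not really technical but philosophical: the argument is wholly non-constructive, defining $f(r,\ell)$ only as a maximum over finitely many realizable profile sets with no quantitative handle on its growth --- precisely the limitation flagged in the remark preceding this lemma.
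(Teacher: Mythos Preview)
Your proof is correct and follows essentially the same approach as the paper: define $f(r,\ell)$ as the maximum, over the finitely many realizable $\ell$-profile sets, of the minimum length of a realizing time series, then invoke Lemma~\ref{lemma:sameprofileset} to transfer to the desired value set and Lemma~\ref{lem: matching profiles} for the distance conclusion. Your execution is in fact slightly tidier than the paper's: by building the constraint $\vectorset(\hat y)=[s]$ directly into the definition of $L(s,S)$ and by handling the padding to exact length $f(r,\ell)$ explicitly, you avoid the paper's separate singleton argument for $\ell\ge 3$ and its special-casing of $\ell\in\{1,2\}$.
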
 

\begin{proof}
Define $\mathcal D_{r,\ell,i} =\{D :
\exists x\in \RR^i \text{ with $|\vectorset(x)| \le r$ and set of $\ell$-profiles } D\}$ and let $\mathcal D_{r,\ell} = \bigcup_i \mathcal D_{r,\ell,i}$.
Next define function $f:\NN \times \NN \rightarrow \NN$ as 
$f(r, \ell) = \max_{D\in \mathcal D_{r,\ell}} \min\{l \in \NN : \exists x\in \RR^{l} \text{ with set of } \ell\text{-profiles } D\}$ for $\ell\ge 3$ and $f(r,\ell) :=f(r,3)$ for $\ell=1$ and $\ell=2$. Now consider an arbitrary 
$x\in \RR^m$ with $|\vectorset(x)|\le r$ and set of $\ell$-profiles $D$. By definition, we have $D\in \mathcal D_{r,\ell}$ and so we know that there exists a $y' \in \RR^{f(r,\ell)}$ with set of profiles $D$. We also observe that for $\ell\ge 3$ every element of $\vectorset(x)$ appears as the only element in some subsequence $S_i^{(x,M)}$ in some $\ell$-profile. 
This implies that $|\vectorset(y')| = |\vectorset(x)|$. Then for $\ell \ge 3$ we can apply Lemma \ref{lemma:sameprofileset} and replace every occurence of the $i$-th largest element of $\vectorset(y')$ in $y'$ by the $i$-th largest element in $\vectorset(x)$ to 
obtain a vector $y$ with $\vectorset(x) = \vectorset(y)$ and set of profiles $D$. By Lemma \ref{lem: matching profiles}
it follows that $d_{dF}(x,q) = d_{dF}(y,q)$ for all $q \in \RR^\ell$.

To solve $\ell =2$ and $\ell =1$ we observe that in this case any $q \in \RR^\ell$ can be transformed to a $q'\in \RR^3$ by copying the last value. This modification preserves the discrete Fréchet distance and so the lemma follows in this case as well.
\end{proof}

The following corollary results from combining Lemma \ref{lem: reduce value domain} and Lemma \ref{lem: existence matching profiles}.

\begin{corollary} 
For every $\varepsilon \in (0,1)$, $\ell \in \NN$ there is a $d_0=d_0(\varepsilon, \ell)$ such that 
for every $m \in \NN$ and every time series $x\in \RR^m$ there exists a 
time series $y
\in \RR^{d_0}$ such that for every $z\in \RR^\ell$ we have
\begin{equation*}
(1-\varepsilon) d_{dF}(x,z) \le d_{dF}(y,z) \le (1+\varepsilon) d_{dF}(x,z).
\end{equation*}
\end{corollary}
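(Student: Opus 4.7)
The plan is to prove this corollary by directly composing the two preceding results, since the corollary is essentially their combination. Given $x \in \RR^m$ and the target approximation factor $\varepsilon$, I would first invoke Lemma \ref{lem: reduce value domain} with parameters $\ell$ and $\varepsilon$ to obtain a time series $x' \in \RR^m$ whose value set has size $r := |\vectorset(x')| \in O(\ell/\varepsilon)$ and which satisfies
\begin{equation*}
(1-\varepsilon)\, d_{dF}(x,z) \le d_{dF}(x',z) \le (1+\varepsilon)\, d_{dF}(x,z)
\end{equation*}
for every $z \in \RR^\ell$. This step contracts the value domain but leaves the complexity $m$ untouched, which is exactly the setup required by Lemma \ref{lem: existence matching profiles}.

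Next, I would apply Lemma \ref{lem: existence matching profiles} to $x'$ and $\ell$. Since $|\vectorset(x')| \le r$, that lemma yields a time series $y \in \RR^{f(r,\ell)}$ with $D_{(y,\ell)} = D_{(x',\ell)}$, $\vectorset(y) = \vectorset(x')$, and in particular
\begin{equation*}
d_{dF}(x',z) = d_{dF}(y,z) \qquad \text{for every } z \in \RR^\ell.
\end{equation*}
Setting $d_0 := d_0(\varepsilon,\ell) := f(r(\varepsilon,\ell),\ell)$, where $r(\varepsilon,\ell) \in O(\ell/\varepsilon)$ is the bound from Lemma \ref{lem: reduce value domain}, gives a constant depending only on $\varepsilon$ and $\ell$, as required.

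Chaining the two displayed inequalities then produces, for every $z \in \RR^\ell$,
\begin{equation*}
(1-\varepsilon)\, d_{dF}(x,z) \le d_{dF}(y,z) \le (1+\varepsilon)\, d_{dF}(x,z),
\end{equation*}
which is the conclusion of the corollary. There is no real obstacle to overcome here; the work has been done in the two lemmas. The only detail worth stating explicitly is that $d_0$ depends only on $\varepsilon$ and $\ell$, which follows because $r$ is a function of $\varepsilon$ and $\ell$ alone by Lemma \ref{lem: reduce value domain}, and $f(\cdot,\cdot)$ from Lemma \ref{lem: existence matching profiles} depends only on its two arguments; in particular, $d_0$ is independent of the input complexity $m$.
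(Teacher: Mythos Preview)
Your approach is the same as the paper's---compose Lemma~\ref{lem: reduce value domain} with Lemma~\ref{lem: existence matching profiles}---and the chain of inequalities is correct. There is one small gap in the bookkeeping: you set $r := |\vectorset(x')|$ and obtain $y\in\RR^{f(r,\ell)}$, but this $r$ depends on the particular input $x$, so the complexity of $y$ is not yet a uniform constant $d_0(\varepsilon,\ell)$; you then silently switch to the uniform bound $r(\varepsilon,\ell)$ when defining $d_0$, and in the last paragraph even assert that ``$r$ is a function of $\varepsilon$ and $\ell$ alone,'' which contradicts your own definition. The paper closes exactly this gap by taking $d_0:=\max_{1\le i\le r(\varepsilon,\ell)} f(i,\ell)$ and observing that any shorter $y$ can be padded to length $d_0$ by repeating its first entry without affecting any discrete Fr\'echet distance. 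Equivalently, since Lemma~\ref{lem: existence matching profiles} only requires $|\vectorset(x')|\le r$, you could invoke it directly with $r = r(\varepsilon,\ell)$ and obtain $y\in\RR^{f(r(\varepsilon,\ell),\ell)}=\RR^{d_0}$ in one stroke.
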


\begin{proof}
By Lemma \ref{lem: reduce value domain} 
we know that there is $r = r(\varepsilon,\ell) \in O(\ell/\varepsilon)$ such that we
can compute a time series $x' \in \RR^m$ with $|\vectorset(x')|\le r$ s.t. for all $z \in \RR^{\ell}$
\begin{equation*}
(1-\varepsilon) d_{dF}(x,z) \le d_{dF}(x',z) \le (1+\varepsilon) d_{dF}(x,z).
\end{equation*} 
Let $f$ be the function as defined in Lemma \ref{lem: existence matching profiles}. By Lemma \ref{lem: existence matching profiles} we get that there
exists $y \in \RR^{f(|\vectorset(x')|,\ell)}$
s.t. for all $z \in \RR^{\ell}$, $d_{dF}(x',z)= d_{dF}(y,z)$. Now define 
$d_0(\varepsilon,\ell) = \max_{1\le i \le r}f(i,\varepsilon)$. 
We observe that if the dimension of $y$ is smaller than $d_0(\varepsilon,\ell)$ then we can repeat the first
element of $y$ in order to get a 
vector of dimension $d_0(\varepsilon,\ell)$ without affecting the discrete Frechet distance. 
In total we get that for all $z \in \RR^\ell$, 
\begin{equation*}
(1-\varepsilon) d_{dF}(x,z) \le d_{dF}(y,z) \le (1+\varepsilon) d_{dF}(x,z).
\end{equation*}
\end{proof}

\subsection{Algorithm}

In this section we turn the previously discussed existential result into an algorithmic one. We have to deal with the fact that we do not know the bound on the complexity $f(|\vectorset(x)|,\varepsilon)$ - we only know that it exists.

Given some time series $x \in \RR^m$ the solution will be to iterate over all  increasing complexities $t$ and all time series
in $\vectorset(x)^t$ until we find one that has the set same of  $\ell$-profiles as $x$.
Our previous results guarantee that this algorithm terminates with 
 $t \leq f(|\vectorset(x)|,\varepsilon)$. By Lemma \ref{lem: existence matching profiles} we have that $d_{dF}(x,q) = d_{dF}(y,q)$, for all $q \in \RR^{\ell}$. 
The challenge is to efficiently compute the set of $\ell$-profiles for a given time series, which we will discuss in the remaining part of the section.

To compute the set $D_{(x,\ell)}$, for some $x \in \RR^m$, we iterate over all sequences $p \in ({|\vectorset(x)|^2})^\ell$ of potential $\ell$-profiles and decide if there exists a pair of traversal matrices $M$ s.t. $p$ is an $\ell$-profile of $(x,M)$. To do so it is sufficient to decide the existence of the corresponding traversal sectors $S^{(x,M)}_{1},\dots, S^{(x,M)}_{\ell}$.

 Consider a sequence $p = ((p^1_1,p^2_1),\dots,(p^1_\ell,p^2_\ell)) \in ({|\vectorset(x)|^2})^\ell$  and let $\min_i$ be the element of $\vectorset(x)$ with rank $p^1_i$ and $\max_i$ be the element of $\vectorset(x)$ with rank $p^2_i$, for $1\leq i \leq \ell$. Note that $\min_i$ and $\max_i$ are supposed to be the minimum and maximum element according to the considered $\ell$-profile. The objective is to decide if there exist traversal sectors
$S_1,\dots,S_\ell$ of $x$ that are consistent with $p$, i.e. for all $i \in [\ell]$, $\min_i,\max_i \in S_i$ and $S_i \subseteq [\min_i,\max_i]$. This is done in a recursive way and stated as dynamic program in the form of Algorithm \ref{alg: Partition of Profile}.

The idea is as follows. For some $t \in [m]$ and $h \in [\ell]$ we would like to know if there are  traversal sectors $S_1,\dots, S_h$ for $(x_1,\dots,x_t)$ that are compatible with $p$.
However, in order to set up a recursion, we also need to know whether the minimum and/or maximum value of $S_h$
has already appeared in $x_1,\dots, x_t$.
For this purpose, we introduce two Boolean variables $a$ and $b$.

Concretely, Algorithm~\ref{alg: Partition of Profile} is a dynamic programm that checks for the existence of partial solutions of the following form. 

\begin{definition}[compatible traversal sectors]
 Given a time series $x=(x_1,\dots,x_m) \in \RR^m$, sequence  $p=\big((\min_1,\max_1),\dots,(\min_\ell,\max_\ell)\big) \in (\vectorset(x)^2)^\ell$, $h \in [\ell]$, $t \in [m]$.
 Traversal sectors $S_1,\dots,S_h$ of $(x_1,\dots,x_t)$ are compatible with $p$ if they satisfy: 
  \begin{enumerate}
    \item $\min_i,\max_i \in S_i$ for $1\leq i\leq h-1$
    \item $S_i \subseteq [\min_i,\max_i]$ for $1\leq i \leq h$
    \end{enumerate}
\end{definition}

Note that the last sector $S_h$ is not required to fulfill the conditions of the profile, in that we don't require $\min_h$ and $\max_h$ to be contained in $S_h$. This is needed as they may not have appeared in the sequence at that time of the algorithm, yet.
Lemma~\ref{lem:partitionDP_invariant} below shows correctness of Algorithm~\ref{alg: Partition of Profile}. The crucial observation is that partial solutions can be combined as follows. 

\begin{observation}\label{obs:combination_ab}
For any $h \in [\ell], t \in [m]$, let $S_1,\dots,S_h$ of $(x_1,\dots,x_t)$ be compatible traversal sectors of  $(x_1,\dots,x_t)$ with $\min_h \in S_h$ and let $S'_1,\dots,S'_h$ be compatible traversal sectors of  $(x_1,\dots,x_t)$ with $\max_h \in S'_h$, then either $S_h \subseteq S'_h$ and therefore also $\min_h \in S'_h$, or $S'_h \subseteq S_h$ and therefore also $\max_h \in S_h$. This follows because the last sector in each case consist of the elements of a prefix of the same sequence.
\end{observation}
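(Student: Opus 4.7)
The plan is to reduce the claim to a simple statement about nested index suffixes. First, I would unpack the definition of traversal sectors in light of the admissible moves $(i,j)\mapsto (i+1,j),(i,j+1),(i+1,j+1)$ in a pair of traversal matrices: the set of input indices matched to each output coordinate is forced to be a contiguous block. Consequently, any compatible traversal sectors $S_1,\dots,S_h$ of $(x_1,\dots,x_t)$ arise from cutpoints $0=a_0<a_1<\dots<a_h=t$ with each $S_j$ being the value set $\{x_i : a_{j-1}<i\le a_j\}$. Applying the same statement to $S'_1,\dots,S'_h$ gives cutpoints $0=a'_0<a'_1<\dots<a'_h=t$ and analogous value-set descriptions.

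Once the sectors are phrased this way, the last block is where the observation lives: $S_h$ is the value set of the suffix range $(a_{h-1},t]$ and $S'_h$ is the value set of the suffix range $(a'_{h-1},t]$, both ending at the common final index $t$. I then carry out a two-case analysis on the order of the penultimate cutpoints. If $a_{h-1}\le a'_{h-1}$, then $(a'_{h-1},t]\subseteq (a_{h-1},t]$, so $S'_h\subseteq S_h$ as value sets; combined with the hypothesis $\max_h\in S'_h$, this gives $\max_h\in S_h$. The opposite inequality $a_{h-1}\ge a'_{h-1}$ symmetrically yields $S_h\subseteq S'_h$, and combined with $\min_h\in S_h$ produces $\min_h\in S'_h$, covering both alternatives stated in the observation.

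The only place that really requires attention is the reduction from compatible traversal sectors to contiguous index blocks; once that viewpoint is adopted, the observation becomes a single sentence of interval arithmetic. I do not anticipate any serious obstacle, since the final sector of any traversal of $(x_1,\dots,x_t)$ necessarily contains $x_t$, forcing any two such last sectors to be nested as suffixes of the sequence $x_1,\dots,x_t$.
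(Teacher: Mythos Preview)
Your proposal is correct and matches the paper's one-line justification: the last sector in each of the two decompositions is the value set of a suffix of the common sequence $(x_1,\dots,x_t)$, and two suffixes of the same sequence are always nested by inclusion. One minor inaccuracy worth fixing is your claim that the sectors arise from strict cutpoints $a_0<\dots<a_h$ and hence partition the index range; consecutive traversal sectors may in fact share a boundary index (when the traversal takes the step $(i,j)\mapsto(i,j{+}1)$), but this does not affect your argument since the last sector is still the value set of some interval $[s,t]$ and the nesting conclusion for $S_h$ and $S_h'$ goes through unchanged.
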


\begin{algorithm}
\caption{\label{alg: Partition of Profile}}
\begin{algorithmic}[1]
\Procedure {\texttt{AssignmentDP}}{$(x_1,\dots,x_m)$, $((\min_1,\max_1),\dots,(\min_\ell,\max_\ell))$}
\State $\feas \gets \{\text{False}\}^{(\ell+1)\times (m+1)}$, $\hmin \gets \{\text{False}\}^{(\ell+1)\times (m+1)}$, $\hmax \gets \{\text{False}\}^{(\ell+1)\times (m+1)}$
\State $\feas[0,0] \gets \text{True}$, $\hmin[0,0] \gets \text{True}$, $\hmax[0,0] \gets \text{True}$
\For{$h = 1 \textbf{ to } \ell $}
    \For{$t = 1 \textbf{ to } m $}
       \If{$x_t \in [\min_h,\max_h]$}
             \If{$\feas[h,t-1]$}
                  \State $\feas[h,t] \gets \text{True}$
                  \If{$\hmin[h,t-1] \lor x_t = \min_h$} 
                     \State $\hmin[h,t] \gets \text{True}$
                  \EndIf
                   \If{$\hmax[h,t-1] \lor x_t = \max_h$} 
                     \State $\hmax[h,t] \gets \text{True}$
                  \EndIf
             \Else \If{$\feas[h-1,t-1] \land \hmin[h-1,t-1] \land \hmax[h-1,t-1]$}
                  \State $\feas[h,t] \gets \text{True}$
                  \If{$x_t = \min_h$} 
                     \State $\hmin[h,t] \gets \text{True}$
                  \EndIf
                   \If{$x_t = \max_h$} 
                     \State $\hmax[h,t] \gets \text{True}$
                  \EndIf                    
                  \EndIf
             \EndIf           
       \EndIf
    \EndFor
\EndFor
\State\Return $\feas[\ell,m] \land \hmin[\ell,m] \land \hmax[\ell,m]$

\EndProcedure
\end{algorithmic}
\end{algorithm}

\begin{restatable}{lemma}{lemassignmentdp}\label{lem:partitionDP_invariant}
Given $x \in \RR^m$ and $p \in {(\vectorset(x)^2)}^\ell$.
Let $D,a$ and $b$ be the tables constructed during Algorithm \ref{alg: Partition of Profile} with input $(x,p)$.
For every $h \in [\ell], t \in [m]$, after the corresponding iteration of the for-loop, we have $D[h,t] = True$
if and only if there exist 
compatible traversal sectors $S_1,\dots,S_h$ of $(x_1,\dots,x_t)$. In addition, we have that $a[h,t] = True$ (resp. $b[h,t] = True$) if and only if there exists such a compatible solution with $\min_h \in S_h$ (resp. $\max_h \in S_h$).
\end{restatable}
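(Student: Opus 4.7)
The plan is to prove all three equivalences simultaneously by induction on the pair $(h,t)$ in the order the nested for-loops visit them (outer loop over $h$, inner over $t$). The base case $(h,t)=(0,0)$ is trivial: the table entries are initialized to True, and the empty sequence of sectors vacuously satisfies the compatibility conditions. For the inductive step, assume the claim holds for all entries computed before $(h,t)$.

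For the forward direction, suppose the algorithm sets $D[h,t]=\text{True}$. Then one of two branches fired, both of which require $x_t \in [\min_h,\max_h]$. In the first branch we used $D[h,t-1]=\text{True}$, and the inductive hypothesis yields compatible sectors $S_1,\dots,S_h$ for $(x_1,\dots,x_{t-1})$; appending $x_t$ to $S_h$ preserves compatibility because $x_t\in[\min_h,\max_h]$. In the second branch we used $D[h-1,t-1]\wedge a[h-1,t-1]\wedge b[h-1,t-1]=\text{True}$, yielding compatible sectors $S_1,\dots,S_{h-1}$ for $(x_1,\dots,x_{t-1})$ with $\min_{h-1},\max_{h-1}\in S_{h-1}$; these sectors now satisfy the previously vacuous boundary conditions for index $h-1$, and we start a fresh singleton sector $S_h=\{x_t\}$. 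The updates to $a[h,t]$ and $b[h,t]$ are handled by the same case split, with the flag being carried from $(h,t-1)$ when extending and being set afresh exactly when $x_t$ equals $\min_h$ or $\max_h$, respectively. Conversely, given any compatible sectors $S_1,\dots,S_h$ of $(x_1,\dots,x_t)$, split on whether $|S_h|>1$ or $S_h=\{x_t\}$. In the former case, $S_1,\dots,S_{h-1},S_h\setminus\{x_t\}$ are compatible for $(x_1,\dots,x_{t-1})$, so $D[h,t-1]$ is True by induction and the first branch triggers. In the latter case, $S_1,\dots,S_{h-1}$ are compatible for $(x_1,\dots,x_{t-1})$ and additionally contain both $\min_{h-1}$ and $\max_{h-1}$ in $S_{h-1}$ (these are now genuine, non-boundary constraints), so $D[h-1,t-1]\wedge a[h-1,t-1]\wedge b[h-1,t-1]$ holds by induction and the second branch fires.

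The step I expect to require the most care is the bookkeeping for $a$ and $b$, because the DP tracks the conditions ``$\min_h\in S_h$'' and ``$\max_h\in S_h$'' independently, yet the lemma's final conjunction $D[\ell,m]\wedge a[\ell,m]\wedge b[\ell,m]$ must certify a \emph{single} solution with both extrema present in the last sector. This is exactly where Observation~\ref{obs:combination_ab} is invoked: two compatible solutions $S_1,\dots,S_h$ and $S_1',\dots,S_h'$ for the same prefix $(x_1,\dots,x_t)$ and same number of sectors have last sectors that are suffixes of the same sequence of matched elements, so one is contained in the other; hence if one witness contains $\min_h$ and another contains $\max_h$, then one of the two witnesses contains both. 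Together with the carrying of flags in the inductive step, this closes the equivalence for $a$ and $b$ and establishes the lemma.
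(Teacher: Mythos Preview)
Your proposal is correct and follows essentially the same approach as the paper: induction over the for-loop iterations, a case split on which branch of the algorithm fires, and invoking Observation~\ref{obs:combination_ab} to merge the separate witnesses for $a$ and $b$ into a single compatible solution. Your converse direction is actually spelled out more explicitly than in the paper; the only minor slip is that the case split should be on whether the last sector spans more than one \emph{index} (traversal sectors are sets of values, so $|S_h|=1$ does not force $S_h$ to arise from a single index), but this is easily patched and does not affect the argument.
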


\begin{proof}
We prove the lemma by induction on the iterations of the inner for-loop. For the base case consider the first iteration with $h=1$ and $t=1$. Assume $x_1 \in [\min_1,\max_1]$. In this case, the clause in line 6 evaluates to True, but the clause in line 7 evaluates to False. However, the clause in line 14 evaluates to True, because these Booleans were set in line 3 to True to initialize the algorithm. In this case, $D[1,1]$ is set to True, which is correct since $S_1=\{x_1\}$ is a compatible traversal sector. Furthermore, $a[1,1]$ and $b[1,1]$ are set correctly. Otherwise, if $x_1 \notin [\min_1,\max_1]$, then there exists no compatible traversal sectors and $D[1,1],a[1,1],$ as well as $b[1,1]$ remain set to False.

For the induction step consider any $h,t \geq 1$ with $h >1$ or $t > 1$. Assume $x_t \in [\min_h,\max_h]$. 
If $D[h,t-1]$ is True, then by induction there exist compatible traversal sectors $S_1,\dots,S_h$ for $(x_1,\dots,x_{t-1})$. Therefore there exist compatible traversal sectors  of $(x_1,\dots,x_t)$ by adding $x_t$ to $S_h$. Furthermore, $a[h,t]$ and $b[h,t]$ are set correctly.

Otherwise, if $D[h,t-1]$ is False, then we check in line 14, if
 $a[h-1,t-1]$ and $b[h-1,t-1]$ are True. By induction and Observation \ref{obs:combination_ab} this is the case if and only if there exist compatible traversal sectors $S_1,\dots,S_{h-1}$ for $(x_1,\dots,x_{t-1})$ and it holds that $\min_{h-1} \in S_{h-1}$ as well as $\max_{h-1} \in S_{h-1}$.
 As such, there exist compatible traversal sectors $S_1,\dots,S_{h}$ with $S_h=\{x_t\}$ for $(x_1,\dots,x_{t})$. Furthermore, $a[h,t]$ and $b[h,t]$ are set correctly with respect to $S_h$. 

Now, assume $x_t \notin [\min_h,\max_h]$. In this case, there exists no compatible traversal sectors and $D[h,t],a[h,t],$ as well as $b[h,t]$ remain set to False.
\end{proof}

The next lemma relates the output of Algorithm \ref{alg: Partition of Profile} to the existence of an $\ell$-profile.

\begin{lemma} \label{lem: check profile feasible}
Given $x \in \RR^m$ and $p =((p^1_1,p^2_1),\dots, (p^1_\ell,p^2_\ell)) \in {|(\vectorset(x)^2)|}^\ell$. Let
\[p' = \big( (min_1, max_1),\dots, (min_\ell, max_\ell)\big),\]
where $min_i,max_i \in \vectorset(x)$ s.t. $rank_x(min_i) = p^1_i$ and $rank_x(max_i) = p^2_i$, for $in \in [\ell]$. Then Algorithm \ref{alg: Partition of Profile} with input $(x,p')$ takes  $O(m\ell)$ time and returns True iff  there exists a pair of traversal matrices $M$ s.t. $p$ is an  $\ell$-profile for $(x,M)$.
\end{lemma}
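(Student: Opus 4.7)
The plan is to prove the two claims of the lemma separately. For the running time, I would observe that Algorithm \ref{alg: Partition of Profile} consists of two nested for-loops of lengths $\ell$ and $m$, and inside each iteration only a constant number of table lookups, comparisons, and Boolean assignments are performed (the membership test $x_t \in [\min_h,\max_h]$ is $O(1)$). The initialization of the three tables of size $(\ell+1)\times(m+1)$ also takes $O(m\ell)$ time. Hence the total running time is $O(m\ell)$.

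For correctness, the main engine will be Lemma \ref{lem:partitionDP_invariant}. Algorithm \ref{alg: Partition of Profile} returns $\feas[\ell,m] \land \hmin[\ell,m] \land \hmax[\ell,m]$. Applying the invariant lemma with $h=\ell$ and $t=m$, this conjunction is True if and only if there exist compatible traversal sectors $S_1,\dots,S_\ell$ of $(x_1,\dots,x_m)$ that additionally satisfy $\min_\ell \in S_\ell$ and $\max_\ell \in S_\ell$. Unfolding the definition of ``compatible'' (which only enforces $\min_i,\max_i \in S_i$ for $i \le h-1$) together with these two additional conditions on the last sector yields that $S_1,\dots,S_\ell$ are traversal sectors of $x$ with $\min_i,\max_i \in S_i$ and $S_i \subseteq [\min_i,\max_i]$ for every $i \in [\ell]$. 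These two conditions together force $\min(S_i) = \min_i$ and $\max(S_i) = \max_i$ for every $i$.

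To finish, I would translate this back into the language of $\ell$-profiles. By definition, the traversal sectors $S_1,\dots,S_\ell$ arise as $S_i = S^{(x,M)}_i$ for some pair of traversal matrices $M = (M_m,M_\ell)$ of matching dimension. Using the definitions of $\min_i,\max_i$ in the statement (namely $\rank_x(\min_i) = p^1_i$ and $\rank_x(\max_i) = p^2_i$), the conditions $\min(S^{(x,M)}_i) = \min_i$ and $\max(S^{(x,M)}_i) = \max_i$ become $\rank_x(\min(S^{(x,M)}_i)) = p^1_i$ and $\rank_x(\max(S^{(x,M)}_i)) = p^2_i$. This is exactly the defining condition for $p$ to be the $\ell$-profile of $(x,M)$. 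Conversely, if $p$ is the $\ell$-profile of some such $M$, then the corresponding sectors are compatible with $p'$ and contain $\min_\ell,\max_\ell$, so the algorithm returns True. Hence the output equals the desired indicator.

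The main obstacle is not technical but conceptual: one has to be careful that the definition of compatibility intentionally omits the constraint $\min_h,\max_h \in S_h$ on the last sector, which is why the algorithm's output couples $\feas[\ell,m]$ with the two additional flags $\hmin[\ell,m]$ and $\hmax[\ell,m]$. Once this is recognized, the proof is essentially a careful unpacking of definitions on top of Lemma \ref{lem:partitionDP_invariant}.
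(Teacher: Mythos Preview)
Your argument is essentially the same as the paper's, but there is one genuine gap. You write that, by Lemma~\ref{lem:partitionDP_invariant}, the conjunction $\feas[\ell,m]\land\hmin[\ell,m]\land\hmax[\ell,m]$ being True is equivalent to the existence of \emph{one} family of compatible traversal sectors $S_1,\dots,S_\ell$ with both $\min_\ell\in S_\ell$ and $\max_\ell\in S_\ell$. That is not what Lemma~\ref{lem:partitionDP_invariant} asserts: it only guarantees that $\hmin[\ell,m]=\text{True}$ yields \emph{some} compatible sectors with $\min_\ell$ in the last sector, and $\hmax[\ell,m]=\text{True}$ yields \emph{possibly different} compatible sectors with $\max_\ell$ in the last sector. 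A priori these are two distinct witnesses, and you still need to produce a single one satisfying both conditions.

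This is precisely the role of Observation~\ref{obs:combination_ab}: for fixed $h,t$, the last sector of any compatible family is a suffix of $(x_1,\dots,x_t)$, so for two such families the last sectors are nested, and the larger one contains both $\min_h$ and $\max_h$. The paper invokes this observation explicitly at this step. Once you add this (or reproduce the one-line argument), the rest of your proof---unfolding compatibility, deducing $\min(S_i)=\min_i$ and $\max(S_i)=\max_i$, and translating to the $\ell$-profile language---is correct and matches the paper.
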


\begin{proof}
Let $D, a$ and $b$ be the tables constructed by Algorithm \ref{alg: Partition of Profile} and assume 
the algorithm returns True. Then, it must be that $a[\ell,m]$ and $b[\ell,m]$, as well as $D[\ell,m]$ are all set to True.
By Lemma \ref{lem:partitionDP_invariant} and Observation~\ref{obs:combination_ab},  there exist traversal sectors $S_1,\dots,S_\ell$ of $x$ that are compatible with $p$ and it holds that $\min_\ell \in S_\ell$ and $\min_\ell \in S_\ell$. By definition there exists a pair of traversal matrices $M$ s.t. $S^{(x,M)}_i = S_i$, for $1\leq i \leq \ell$ implying that $p$ is an $\ell$-profile of $(x,M)$.

Next assume that there exists a pair of traversal matrices $M$ s.t. $p$ is an $\ell$-profile for $(x,M)$. Then  $S^{(x,M)}_1,\dots,S^{(x,M)}_\ell$ are traversal sectors for $x$ that are compatible with $p$ and it holds that $\min_\ell \in S^{(x,M)}$ and $\max_\ell \in S^{(x,M)}$. By Lemma \ref{lem:partitionDP_invariant}, 
$a[\ell,m]$ and $b[\ell,m]$, as well as $D[\ell,m]$ are correctly set to True.

The initialization of  $D$ takes $O(\ell m)$ time and the algorithm takes $O(\ell m)$ many iterations of the inner for-loop, which require constant time each.
\end{proof}

\begin{algorithm}
\caption{\label{alg: ComplexityReduction}}
\begin{algorithmic}[1]
\Procedure {\texttt{ComplexityReduction}}{$x \in \RR^m$, $\ell \in \NN$, $\varepsilon \in \RR_{\geq 0}$}

\State $x'\gets \text{ReduceValueDomain}(x,\ell,\varepsilon)$

\State Let $t$ be smallest value s.t. it exists $z \in \vectorset(x')^t$ with $D_{(z,\ell)} = D_{(x',\ell)}$  

\State \Return $z$

\EndProcedure
\end{algorithmic}
\end{algorithm}

\begin{theorem}
\label{theorem:dimreduction}
Let $\varepsilon \in (0,1)$ and $\ell \in \NN$. There exists a $d_0 = d_0(\varepsilon, \ell)$ such that for arbitrary $m\in \NN$ and input time series $x \in \RR^m$ Algorithm \ref{alg: ComplexityReduction} computes in time $O(m \ell \log^2 m) + O(\ell/\eps)^{2 d_{0}}$  
a time series $z \in \RR^{d_0}$ s.t. for all $y \in \RR^\ell$,
\begin{equation*}
    (1-\varepsilon) d_{dF}(x,y) \leq d_{dF}(z,y) \leq (1 + \varepsilon)  d_{dF}(x,y).
\end{equation*}
\end{theorem}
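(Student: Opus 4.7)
The plan is to chain the three main lemmas of this section: Lemma \ref{lem: reduce value domain} gives a value-reduced $x'$ with only $r = O(\ell/\eps)$ distinct entries that satisfies $(1-\eps) d_{dF}(x,y) \leq d_{dF}(x',y) \leq (1+\eps) d_{dF}(x,y)$ for every $y \in \RR^\ell$; Lemma \ref{lem: existence matching profiles} guarantees the existence of some time series of complexity at most $f(r,\ell)$ over the value set $\vectorset(x')$ with the same $\ell$-profile set as $x'$; and Lemma \ref{lem: matching profiles} turns equality of $\ell$-profile sets (together with equality of value sets) into exact equality of Fréchet distances. I would set $d_0 := \max_{1 \leq i \leq r} f(i,\ell)$, which depends only on $\eps$ and $\ell$.

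For correctness, the existence result from Lemma \ref{lem: existence matching profiles} ensures that the search in Step 3 of Algorithm \ref{alg: ComplexityReduction} terminates with some $t \leq d_0$ and some $z \in \vectorset(x')^t$ with $D_{(z,\ell)} = D_{(x',\ell)}$. To invoke Lemma \ref{lem: matching profiles} I need $\vectorset(z) = \vectorset(x')$ and not merely the containment enforced by the search. For $\ell \geq 3$ this is automatic: every value of $\vectorset(x')$ can be isolated as the unique element of some traversal sector, so every rank in $[|\vectorset(x')|]$ occurs as a coordinate of some profile in $D_{(x',\ell)}$, and equality of profile sets forces $|\vectorset(z)| \geq |\vectorset(x')|$. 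The cases $\ell \in \{1,2\}$ reduce to $\ell = 3$ by padding $y$ with a repetition of its last entry, which does not change the discrete Fréchet distance, as in the proof of Lemma \ref{lem: existence matching profiles}. Lemma \ref{lem: matching profiles} then gives $d_{dF}(z,y) = d_{dF}(x',y)$ for every $y \in \RR^\ell$, and composing with the $(1\pm\eps)$ guarantee of Step 2 yields the target distortion. If $t < d_0$, I pad $z$ with copies of its first entry to reach dimension $d_0$ without altering any Fréchet distance.

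For the running time, Step 2 costs $O(\ell m \log m \log(m/\ell)) = O(m \ell \log^2 m)$ by Lemma \ref{lem: reduce value domain}. To implement Step 3 I would first precompute $D_{(x',\ell)}$ by enumerating all $|\vectorset(x')|^{2\ell} = O(\ell/\eps)^{2\ell}$ candidate rank sequences and invoking Algorithm \ref{alg: Partition of Profile} on each, at a cost of $O(m\ell)$ per call by Lemma \ref{lem: check profile feasible}. I would then iterate $t$ from $1$ to $d_0$ and enumerate the at most $\sum_{t=1}^{d_0} |\vectorset(x')|^t = O(\ell/\eps)^{d_0}$ candidate sequences $z$, for each one computing $D_{(z,\ell)}$ in $O((\ell/\eps)^{2\ell} \cdot d_0 \ell)$ time and testing equality against the precomputed set. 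The total fits into the claimed $O(m\ell \log^2 m) + O(\ell/\eps)^{2d_0}$ after absorbing the $(\ell/\eps)^{O(\ell)}$ factors into the two respective terms. The main obstacle I anticipate is the first point: cleanly arguing $\vectorset(z) = \vectorset(x')$ from equality of profile sets, since this is what makes the search space $\vectorset(x')^t$ (rather than a larger universe) sufficient and what licenses the application of Lemma \ref{lem: matching profiles}; the running-time bookkeeping, while slightly delicate around the precomputation of $D_{(x',\ell)}$, is routine by comparison.
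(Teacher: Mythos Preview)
Your proposal is correct and takes essentially the same approach as the paper: chain Lemmas~\ref{lem: reduce value domain}, \ref{lem: existence matching profiles}, and~\ref{lem: matching profiles}, define $d_0$ through the function $f$, and bound the brute-force enumeration in Step~3 by $O(\ell/\eps)^{2d_0}$. You are in fact slightly more careful than the paper about why the algorithm's output $z$ must satisfy $\vectorset(z)=\vectorset(x')$; note that your case split on $\ell$ is unnecessary, since for every $\ell\geq 1$ the largest rank occurring in $D_{(x',\ell)}$ is exactly $|\vectorset(x')|$, so profile-set equality together with $\vectorset(z)\subseteq\vectorset(x')$ already forces $\vectorset(z)=\vectorset(x')$.
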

\begin{proof}
    Let $x'$ be the time series returned by ReduceValueDomain$(x,\ell,\varepsilon)$. Then by Lemma \ref{lem: reduce value domain} for all $y \in \RR^\ell$ 
    \[(1-\varepsilon) d_{dF}(x,y) \leq d_{dF}(x',y) \leq (1+\varepsilon) d_{dF}(x,y)\]
    with  $|\vectorset(x')| \leq r_0$, where $r_0 \in O(\ell/\varepsilon)$.  
    By Lemma \ref{lem: existence matching profiles} there exists a function $f:\NN \times \NN \rightarrow \NN$ s.t. there exists 
    a time series $z \in \vectorset(x')^{f(|\vectorset(x')|,\ell)}$, with $D_{(z,\ell)} = D_{(x',\ell)}$ and $\vectorset(x') = \vectorset(z)$,
    which further implies by Lemma \ref{lem: matching profiles} that for arbitrary $y\in \RR^\ell$ we have $d_{dF}(z,y) = d_{dF}(x',y)$. It follows that  \[(1-\varepsilon) d_{dF}(x,y) \leq d_{dF}(z,y) \leq (1+\varepsilon) d_{dF}(x,y).\]

    By \Cref{lem: reduce value domain}, the time needed to compute $x'$ is in $O(\ell m\log^2 m)$.   
    Then, by \Cref{lem: check profile feasible}, computing $D_{(x',\ell)}$ takes $O(\ell/{\eps})^{2\ell}\cdot m$ time by enumerating all $|\vectorset(x')|^{2\ell}$ candidate $\ell$-profiles and checking if they are valid $\ell$-profiles for $x'$ using \Cref{alg: Partition of Profile}. 
    To find the time series $z$ with smallest complexity with its vertices in $\vectorset(x')$ such that $D_{(z,\ell)} = D_{(x',\ell)}$, we enumerate all vectors over 
    $\vectorset(x')$ in increasing length until we find a vector with the same set of $\ell$-profiles as $x'$. For each increasing value of $i=1,2,\ldots,t$, for each vector $z\in \vectorset(x')^i$ we compute its set of $\ell$-profiles, by enumerating all $|\vectorset(x')|^{2\ell}$ candidate $\ell$-profiles and checking if they are valid $\ell$-profiles for $z$ using \Cref{alg: Partition of Profile}.  By definition, we have $t\leq d_0 := \max(\max_{1\le r \le r_0} f(r,\ell), 2\ell )$. 
    The overall running time of this step is in $O(d_0^2)\cdot O(\ell/\eps)^{d_0}\cdot O(\ell/\eps)^{2\ell} \subseteq O(\ell/\eps)^{2d_0}$. 
\end{proof}

    We remark that although the algorithm is constructive, the bound on the running time in the above theorem is only existential. We do not have an explicit bound for the dependence on $\varepsilon$ and $\ell$.
    
\section{Near Linear Time Approximation Scheme for $(k,\ell)$-Median}
In this section, we take all necessary steps to make the linear time approximation scheme by Cohen-Addad et al. \cite{CFS21} compatible with our problem. Since their algorithm requires a pre-specified set from which the centers are chosen, we need to efficiently compute a bounded set of candidate centers that approximately covers the set of realizable solutions. Then, it remains to argue that the metric space after the dimension reduction has a doubling dimension that is independent of the original dimension. 

Throughout this section, we adopt the representation of traversals as ordered sequences of pairs of indices, as defined in \Cref{section:preliminaries}, as it provides the most natural framework for our purposes. In particular, in certain technical arguments where the enumeration of traversals is required, the alternative representation of traversal matrices proves less convenient (and less efficient in terms of space usage).  

\subsection{Computing Candidate Centers}
We begin with an auxiliary lemma on discretizing the set of time series that have complexity $\ell$ and are within distance $r$ from an arbitrary time series of complexity $m$. This was essentially proven and used by Filtser et al.~\cite{FFK23} in the context of approximate nearest neighbor data structures. Since there is no standalone lemma with the exact same statement in~\cite{FFK23}, we include a proof for completeness. 

\begin{lemma}
\label{lemma:candidate_single_scale}
Given a time series $x\in \RR^m$, its minimum-error $\ell$-simplification $\tilde{x}$, $r>0$ and $\eps \in (0,1)$, we can compute a set of time series $S_{x,r,\eps}$, in time $O(1/\eps)^{\ell}$, such that $|S_{x,r,\eps}| \in O(1/\eps)^{\ell}$ and for all $y\in \RR^{\ell}$, if $d_{dF}(x,y)\leq r$ then there exists $y'\in S_{x,r,\eps}$ such that $d_{dF}(y',y)\leq \eps r$. 
\end{lemma}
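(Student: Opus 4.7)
The plan is to leverage the minimum-error simplification $\tilde{x}$ to restrict the candidate values coordinate-wise, and then place a fine one-dimensional grid around the vertices of $\tilde x$. First, by the defining property of a minimum-error $\ell$-simplification, any $y \in \RR^\ell$ with $d_{dF}(x,y) \le r$ satisfies $d_{dF}(x,\tilde{x}) \le d_{dF}(x,y) \le r$, so the triangle inequality gives $d_{dF}(\tilde{x},y) \le 2r$. Fix any traversal realising this distance between $\tilde{x}$ and $y$. Since the traversal must contain a pair with second coordinate equal to $j$ for every $j\in[\ell]$, each $y_j$ is matched to at least one vertex $\tilde{x}_i$, so $|y_j - \tilde{x}_i| \le 2r$. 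Hence every coordinate of every admissible $y$ lies in the one-dimensional set $I := \bigcup_{i=1}^\ell [\tilde{x}_i - 2r,\; \tilde{x}_i + 2r]$.

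Next, I would discretise $I$ by an equally spaced grid. Concretely, in each interval $[\tilde{x}_i - 2r, \tilde{x}_i + 2r]$ take the set of points of the form $\tilde{x}_i - 2r + k\eps r$ for $k = 0,1,\dots,\lceil 4/\eps\rceil$, and let $G$ be the union of these points across $i\in[\ell]$. Then $|G| \in O(\ell/\eps)$, and by construction every value in $I$ is within $\eps r/2$ of some point of $G$. Define
\[
S_{x,r,\eps} := G^\ell,
\]
which can be enumerated in time proportional to its size. Because $\ell$ is treated as a constant in our setting, $|S_{x,r,\eps}| = |G|^\ell \in O(\ell/\eps)^\ell = O(1/\eps)^\ell$, matching the claimed bound on both cardinality and construction time.

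For correctness, take any $y \in \RR^\ell$ with $d_{dF}(x,y)\le r$ and define $y' \in S_{x,r,\eps}$ by rounding each coordinate $y_j$ to its nearest point in $G$. The previous paragraph ensures each $y_j$ lies in $I$, so $|y_j - y'_j| \le \eps r/2$ for every $j$. Using the identity traversal (the pair of traversal matrices $(I_\ell,I_\ell)$) in the definition of $d_{dF}$, we obtain
\[
d_{dF}(y,y') \le \|y - y'\|_\infty = \max_{j\in[\ell]} |y_j - y'_j| \le \eps r,
\]
as required.

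The only subtle step is the argument that each $y_j$ is within $2r$ of some vertex of $\tilde{x}$, which rests on the structural fact that a traversal visits every index of both sequences; this is immediate from the definition in Section~\ref{section:preliminaries}, so I do not expect a genuine obstacle. Everything else is a standard covering argument in the $\ell^\infty$ norm, transferred to the discrete Fr\'echet setting via the trivial identity traversal between two sequences of the same complexity.
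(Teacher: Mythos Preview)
Your argument is correct, and in fact a bit more direct than the paper's: you simply let each coordinate of $y'$ range over the union $G$ of the grids around all $\ell$ vertices of $\tilde{x}$ and take $S_{x,r,\eps}=G^{\ell}$. The paper instead enumerates the (at most $4^{\ell}$) traversals between two length-$\ell$ sequences, and for a fixed traversal $T$ restricts the $j$th coordinate to the grid around the single vertex $\tilde{x}_{i_j}$ that $y_j$ is matched to in $T$; the output is the union of these products $\prod_j \tilde I_{i_j}$ over all $T$.

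The trade-off shows up in the size bound. Your set has size $|G|^{\ell}=\Theta(\ell/\eps)^{\ell}$, while the paper's has size $4^{\ell}\cdot O(1/\eps)^{\ell}=O(1/\eps)^{\ell}$. Your step ``$O(\ell/\eps)^{\ell}=O(1/\eps)^{\ell}$'' is only valid if $\ell$ is absorbed into the big-$O$ constant, i.e., if $\ell$ is treated as a fixed constant. That is indeed how $\ell$ is used in the paper's main theorem, so your proof suffices for the downstream application; but the lemma as stated does not assume $\ell$ constant, and the paper's traversal-based construction matches the stated $O(1/\eps)^{\ell}$ bound uniformly in $\ell$. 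If you want to match the lemma verbatim, restricting coordinate $j$ to the grid around the matched vertex (rather than to all of $G$) recovers the missing $\ell^{\ell}$ factor.
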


\begin{proof}
    Let $y = (y_1,\ldots, y_{\ell})$, $\tilde{x}=(\tilde{x}_1,\ldots, \tilde{x}_{\ell})$ and $\mathcal{G}_{\eps r}$ be the regular grid with cell width $r\varepsilon$. We define  ${I}_j := [\tilde{x}_j- (2r+\eps), \tilde{x}_j+ (2r+\eps)]$ and $\tilde{I}_j := I_j\cap \mathcal{G}_{\eps r}$, for any $j \in [\ell]$.
    For each traversal $T$ of two time series with complexity $\ell$
    we compute a set of time series $S_T$ as follows: for each $j\in [\ell]$ let $i_{j}$ be the index of the first vertex matched to the vertex at index $j$ in the traversal, i.e., $i_{j}= \min\{i:~ (i,j)\in T \}$
    . We compute $S_T = \prod_{j=1}^{\ell}\tilde{I}_{i_j}$ and we output $S_{x,r,\eps} = \bigcup_{T} S_T$. 
    
    For a fixed traversal $T$, there are $O(1/\eps)^{\ell}$ combinations of vertices defining the time series included in $S_T$. By \cite{FFK23}[Lemma 4], the number of different traversals is at most $4^{\ell}$. Hence, the running time of computing  the set $S_{x,r,\varepsilon}$ and its size  are upper bounded by $O(1/\eps)^{\ell}$.
    
    To show correctness, we first observe that if $d_{dF}(x,y)\leq r$, then by the triangle inequality $d_{dF}(\tilde{x},y) \leq d_{dF}(x,y)+d_{dF}(x,\tilde{x})\leq 2r$. Hence, there is an optimal traversal $T^{\ast}$ matching the vertices of $y$ with vertices of $\tilde{x}$ with cost at most $2r$, which implies that there exists a time series in $S_{T^{\ast}}$ with vertices $y_1',\ldots,y_{\ell}'$ such that for any $j\in [\ell]$, $|y_j'-y_j| \leq \eps r$. Therefore, $d_{dF}(y,y')\leq \eps r$. 

\end{proof}
We now proceed by computing a set of candidate centers, i.e., a set of time series with complexity $\ell$, from which the $k$ centers can be chosen, while only sacrificing an arbitrarily small approximation factor from the cost of the optimal solution. 
\begin{lemma}
\label{lemma:mediancandidates}
    Given a set of time series $P \subset \RR^m$ and $k,\ell \in \NN$, and $\eps \in (0,1)$, we can compute in time $O(m^3 n\ell)  + O(n\log n)\cdot O(1/\eps)^{\ell}$ a set $S \subset \RR^\ell$ of size $O( n \log n) \cdot O(1/\eps)^{\ell}$ such that there exists a set $C \subset S$, $|C|=k$, that satisfies 
    \[
    \sum_{x\in P} \min_{c\in C}d_{dF}(x,c) \leq (1+3\eps)\Delta^{\ast},
    \]
    where $\Delta^{\ast}$ is the optimal $(k,\ell)$-median cost.  
\end{lemma}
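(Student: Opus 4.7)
The plan is to leverage \Cref{lemma:candidate_single_scale} at every relevant distance scale around every input curve, after first computing a constant-factor upper bound on $\Delta^{\ast}$. First I would compute the minimum-error $\ell$-simplification $\tilde{x}$ for each $x\in P$ and build the $n\times n$ discrete Fréchet distance matrix between $P$ and $\{\tilde{x} : x\in P\}$; running any polynomial-time metric $k$-median approximation on this instance then yields $k$ centers of cost $\Phi$ with $\Delta^{\ast}\le \Phi\le \gamma\Delta^{\ast}$ for some absolute constant $\gamma$. The approximation ratio holds because using $\tilde{x}_{c^{\ast}}$ in place of each optimal center $c^{\ast}$ blows up the cost by at most a factor of $3$, by two applications of the triangle inequality together with the defining property $d_{dF}(x,\tilde{x})\le d_{dF}(x,y)$ for every $y\in\RR^{\ell}$.

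I then set $r_0 := \eps\Phi/(2\gamma n)$ and consider scales $r_j := 2^j r_0$ for $j=0,1,\ldots,\lceil\log_2(2\gamma n/\eps)\rceil$, so that $r_j$ ranges from $r_0$ up to at least $\Phi$. For every $x\in P$ and every $r_j$ I invoke \Cref{lemma:candidate_single_scale} with parameters $(x,\tilde{x},r_j,\eps)$, and I let $S$ be the union of all returned sets $S_{x,r_j,\eps}$ together with all simplifications $\tilde{x}$. Treating $\eps$ and $\ell$ as constants, this gives $|S|= O(n\log n)\cdot O(1/\eps)^{\ell}$ and the construction takes $O(n\log n)\cdot O(1/\eps)^{\ell}$ time on top of the preprocessing phase.

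For correctness, I fix an optimal solution $C^{\ast}$ inducing clusters $\{P_{c^{\ast}}\}_{c^{\ast}\in C^{\ast}}$ and, for each $c^{\ast}$, let $x_{c^{\ast}} := \argmin_{x\in P_{c^{\ast}}} d_{dF}(x,c^{\ast})$ and $r^{\ast}_{c^{\ast}} := d_{dF}(x_{c^{\ast}},c^{\ast})$. I build $C\subseteq S$ by picking one surrogate per optimal center: if $r^{\ast}_{c^{\ast}}\le r_0$ I take $\tilde{x}_{c^{\ast}}$, which satisfies $d_{dF}(c^{\ast},\tilde{x}_{c^{\ast}})\le 2r_0$ by the triangle inequality and the simplification property; otherwise I take the smallest scale $r= r_j \ge r^{\ast}_{c^{\ast}}$ (so $r\le 2r^{\ast}_{c^{\ast}}$) and invoke \Cref{lemma:candidate_single_scale} to obtain $y' \in S_{x_{c^{\ast}},r,\eps}$ with $d_{dF}(y',c^{\ast})\le \eps r\le 2\eps r^{\ast}_{c^{\ast}}$. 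Assigning each $z\in P_{c^{\ast}}$ to the chosen surrogate and applying the triangle inequality, the per-cluster overhead is at most $2|P_{c^{\ast}}|r_0$ in the first case and at most $2\eps|P_{c^{\ast}}|r^{\ast}_{c^{\ast}}\le 2\eps\sum_{z\in P_{c^{\ast}}}d_{dF}(z,c^{\ast})$ in the second, using that $r^{\ast}_{c^{\ast}}$ is the smallest distance in the cluster. Summing over all clusters, the total overhead is at most $2nr_0 + 2\eps\Delta^{\ast}\le \eps\Phi/\gamma + 2\eps\Delta^{\ast}\le 3\eps\Delta^{\ast}$, producing a set $C$ of size $k$ and cost at most $(1+3\eps)\Delta^{\ast}$.

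The main obstacle I expect is matching the stated $O(m^3 n\ell)$ preprocessing time for the constant-factor approximation (the naive route of pairwise distances followed by a metric approximation gives $O(n^2 m\ell)$, so a more careful implementation or a different metric approximation routine is needed), together with handling the small-distance regime where $r^{\ast}_{c^{\ast}}$ falls below any scale enumerated; the latter is precisely what forces us to add the simplifications $\{\tilde{x}\}$ to $S$ as a fallback.
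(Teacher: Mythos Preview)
Your approach is essentially the same as the paper's: obtain a constant-factor estimate of $\Delta^{\ast}$, enumerate $O(\log n)$ geometric scales around each input curve, and invoke \Cref{lemma:candidate_single_scale} at every scale; the two-case correctness analysis (closest point very near the center versus not) is also the same. Two remarks on where your write-up diverges.

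First, the running-time obstacle you flag is exactly what the paper sidesteps: instead of building an $n\times n$ distance matrix and running a generic metric $k$-median approximation, the paper invokes the $12$-approximation for $(k,\ell)$-median from \cite{buchin2019hardness}[Theorem~10], whose running time $O(m^3 + mn\ell(m + \log n \log(n/\ell)))$ is what produces the stated $O(m^3 n\ell)$ term. Your route through a black-box metric approximation cannot meet this bound in general (it yields $\Omega(n^2 m\ell)$), so this is the one genuine gap between your proposal and the claimed lemma.

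Second, your fallback of adding the simplifications $\{\tilde{x}\}$ to $S$ is not needed. If $r^{\ast}_{c^{\ast}}\le r_0$ then already $d_{dF}(x_{c^{\ast}},c^{\ast})\le r_0$, so \Cref{lemma:candidate_single_scale} applied at scale $r_0$ yields a surrogate within $\eps r_0$ of $c^{\ast}$; this is how the paper handles the small-distance regime (with $r_0=\Delta/(12n)$, independent of $\eps$, so only $O(\log n)$ scales are needed). Your version with the simplification fallback is correct but slightly looser and adds an unnecessary case.
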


\begin{proof}
    We first compute $\Delta$ such that $\Delta^* \leq \Delta \leq 12 \cdot \Delta^*$ using the algorithm from \cite{buchin2019hardness}[Theorem 10]. Then, for each $x \in P$ we compute a minimum-error $\ell$-simplification $\tilde{x}$ using \cite{bereg2008simplifying}. For each $ i \in \{0,\dots , 3+\lceil \log(n)\rceil\}$, let $r_i = \frac{2^i\cdot \Delta}{12 n}$ and for each $x\in P$, compute $S_{x,r_i,\eps}$ using \Cref{lemma:candidate_single_scale} with $\tilde{x}$. We output $S=\bigcup_{x\in P}\bigcup_{i=1}^{3+\lceil \log(n)\rceil} S_{x,r_i,\eps}$.

    The time needed to compute $\Delta$ is in  $O(m^3 + mn\ell(m + \log n \log(n/\ell)))$~\cite{buchin2019hardness}. 
    The time needed to compute all simplifications using \cite{bereg2008simplifying} is $O(n\ell  m  \log m  \log(m/\ell))$. Given the simplifications, by \Cref{lemma:candidate_single_scale}, the running time to create $S$ and its size is in $O(n\log n) \cdot (1/\eps)^{\ell}$ since we consider $O(\log n)$ different parameters $r_i$. 

    To show correctness, consider any time series $c\in C^{\ast}$, where $C^{\ast}$ is an optimal solution for $(k,\ell)$-median. For any $x\in P$ that has $c$ as its closest center from $C^*$ it holds that $d_{dF}(x,c)\leq \Delta^{\ast}$. 
    Now let $x_c$ be the time series in $P$ which has smallest discrete Fr\'echet distance to $c$  (ties are broken arbitrarily)  and let $i^{\ast}$ be the smallest value $i$ such that 
    $d_{dF}(x_c,c)\leq r_i$. By \Cref{lemma:candidate_single_scale}, $S_{x_c,r_{i^\ast},\eps}$ contains a time series $c'\in \RR^{\ell}$ such that $d_{dF}(c,c')\leq \eps r_{i^{\ast}}$. 
    If $i^{\ast}=0$, then $d_{dF}(c,c') \leq \frac{\eps \Delta^{\ast}}{n}$ and by the triangle inequality, for any $x\in P$, $d_{dF}(x,c')\leq d_{dF}(x,c)+ \frac{\eps \Delta^{\ast}}{n}$. 
    If $i^{\ast}>0$, then by the triangle inequality, for any $x\in P$ that has $c$ as its closest center from $C^*$,
    \begin{align*}
        d_{dF}(x,c')&\leq d_{dF}(x,c)+d_{dF}(c,c')\\
        &\leq  d_{dF}(x,c)+ \eps \cdot r_{i^{\ast}}\\
        &\leq  d_{dF}(x,c)+ 2\eps \cdot d_{dF}(x_c,c)\\
        & \leq (1+2\eps)\cdot  d_{dF}(x,c).
    \end{align*}
 Hence, for any $c\in C^{\ast}$ there is a $c' \in S$ such that for any $x\in P$ that has $c$ as its closest center, $d_{dF}(x,c') \leq \max ( (1+2\eps)\cdot  d_{dF}(x,c) ,  d_{dF}(x,c)+ \frac{\eps \Delta^{\ast}}{n})$. 
 Now, let $C$ be a set containing one such $c'$ for each $c \in C^{\ast}$. The cost of this solution is
 \begin{align*}
     \sum_{x\in P} \min_{c'\in C}d_{dF}(x,c') 
     &\leq 
     \sum_{x\in P} \min_{c\in C^{\ast}} \max \left (  (1+2\eps)\cdot  d_{dF}(x,c) ,  d_{dF}(x,c)+ \frac{\eps \Delta^{\ast}}{n} \right )
     \\
     &\leq  \sum_{x\in P} \min_{c\in C^{\ast}}   \left((1+2\eps)\cdot  d_{dF}(x,c) + \frac{\eps \Delta^{\ast}}{n}\right)
     \\
     &\leq  (1+2\eps)\cdot  \sum_{x\in P} \min_{c\in C^{\ast}}    d_{dF}(x,c) + n\cdot \frac{\eps \Delta^{\ast}}{n}
     \\
     &\leq (1+3\eps)\Delta^{\ast}.    
 \end{align*}

\end{proof}

\subsection{Doubling Dimension Reduction}
In this section, we present standard results concerning the doubling dimension of time series, under the discrete Fr\'echet distance, which, when combined with our dimension reduction technique, facilitate the efficient application of the algorithm of Cohen-Addad et al. \cite{CFS21}. Essentially, our dimension reduction technique can be used to effectively reduce the doubling dimension of the input time series. 
\begin{definition}{(Doubling Dimension)}
Consider any metric space $(X,d_{X})$ and for any $x\in X$ let $B_X(x,r)=\{y\in X \mid d_{X}(x,y)\leq r\}$. The {\em doubling constant} of $X$, denoted $\lambda_X$, is the smallest integer $\lambda_X$ such that for any $x \in X$ and $r > 0$, the ball $B_X(x, r)$ can be covered by at most $\lambda_X$ balls of radius $r/2$ centered at points in $X$. The {\em doubling dimension} of $(X,d_{X})$ is defined to be equal to $\log \lambda_X$.  
\end{definition}

The following statement about the doubling dimension of the discrete Fr\'echet distance is folklore, but we include a proof for completeness. 
\begin{prp}
\label{prop:ddim}
The metric space defined on the equivalence classes of time series with complexity at most $m$ that have pairwise  discrete Fr\'echet distance $0$,  
equipped with the discrete Fr\'echet distance, has doubling dimension $\Theta(m)$.
\end{prp}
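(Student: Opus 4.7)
The plan is to prove the upper bound $O(m)$ and the lower bound $\Omega(m)$ separately. For the upper bound, I will cover any Fr\'echet ball $B(x,r)$ around a time series $x$ of complexity $m_x\le m$ by $2^{O(m)}$ Fr\'echet balls of radius $r/2$ with centers in the metric space. The key observation is that a time series $y$ of complexity $\ell'$ lies in $B(x,r)$ if and only if some pair of traversal matrices $(M_{m_x}, M_{\ell'})$ of matching dimension satisfies $\| M_{m_x} x - M_{\ell'} y\|_\infty \le r$. Once this pair is fixed, the constraint restricts $y$ to an axis-aligned box in $\RR^{\ell'}$ with sides of length at most $2r$, since each $y_j$ must lie within distance $r$ from every $x_i$ matched to it. This box is covered by $2^{\ell'}\le 2^m$ axis-aligned $\ell_\infty$-balls of radius $r/2$, and since $d_{dF}(u,v)\le \| u-v\|_\infty$ for time series of equal complexity, each such $\ell_\infty$-ball is contained in a Fr\'echet ball of radius $r/2$ around the same center. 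Summing over all target complexities $\ell'\in [m]$ and all monotone traversals between length-$m_x$ and length-$\ell'$ sequences (of which there are at most $2^{O(m_x+\ell')}$ by the standard Delannoy path count) yields a cover by $2^{O(m)}$ Fr\'echet balls, establishing the upper bound.

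For the lower bound I will exhibit a family of $2^{\Omega(m)}$ time series of complexity at most $m$ packed into a common Fr\'echet ball of constant radius with pairwise Fr\'echet distance at least $1/3$. Let $k=\lfloor m/2\rfloor$ and for each $b\in \{1,2\}^k$ define $y^b \in \RR^{2k}$ by $y^b_{2i-1}=i$ and $y^b_{2i}=i+b_i/3$. Taking $c\in \RR^{2k}$ with $c_{2i-1}=i$ and $c_{2i}=i+1/2$, the position-by-position traversal shows $d_{dF}(y^b,c)\le 1/6$, so all $2^k$ of the $y^b$ lie in $B(c,1/6)$. The heart of the argument is showing $d_{dF}(y^b,y^{b'})\ge 1/3$ whenever $b\ne b'$: if some traversal had cost strictly less than $1/3$, then every integer anchor $y^b_{2i-1}=i$ could only be matched to entries of $y^{b'}$ within distance $1/3$, and inspecting the values of $y^{b'}$, only $y^{b'}_{2i-1}=i$ itself qualifies (the nearest non-integer values of $y^{b'}$ lie at distance exactly $1/3$ or $2/3$). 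Monotonicity of the traversal then forces $y^b_{2i}$ to be matched with some $y^{b'}_j$ for $j\in\{2i-1,2i,2i+1\}$; a case check shows that strict cost $<1/3$ requires $j=2i$ with $b_i=b'_i$, and doing this for all $i$ contradicts $b\ne b'$. Consequently any Fr\'echet ball of radius $1/12$ contains at most one $y^b$, so covering $B(c,1/6)$ by balls of half its radius requires at least $2^k$ balls, giving doubling constant $2^{\Omega(m)}$ and thus doubling dimension $\Omega(m)$.

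The main obstacle in the upper bound is that candidate $y$'s have variable complexity, so one cannot directly reduce to a fixed Euclidean space; parameterizing simultaneously by the target complexity $\ell'$ and the traversal type circumvents this at the cost of an extra exponential-in-$m$ factor, which is anyway tight up to constants by the matching lower bound. The main obstacle in the lower bound is ensuring the packing separation survives the alignment freedom of the Fr\'echet distance; the idea is to place integer anchors $1,2,\dots,k$ with unit spacing (much larger than the intended packing distance $1/3$) so that any sufficiently cheap traversal is forced to match position-by-position, after which the argument reduces to a standard $\ell_\infty$ packing in $\RR^k$.
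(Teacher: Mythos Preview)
Your proof is correct. The upper bound follows essentially the same route as the paper: enumerate traversals, observe that for a fixed traversal each $y_j$ is confined to an interval of length at most $2r$, and cover the resulting box by $2^{O(m)}$ half-radius $\ell_\infty$-balls which are in turn contained in Fr\'echet balls. The only cosmetic difference is that you sum over target complexities $\ell'\le m$, whereas the paper pads every $y$ to complexity exactly $m$ by repeating the first vertex and then works at a single complexity; both routes yield the same $2^{O(m)}$ count.

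For the lower bound you use a genuinely different construction. The paper takes $x_i=3i$ and considers the $2^m$ perturbations in $\prod_{i=1}^m\{x_i-1,x_i+1\}$; the $3$-spacing forces any cheap traversal to be essentially diagonal, so distinct perturbations are far apart, and one gets $2^m$ points in a radius-$1$ ball no two of which fit in a single radius-$1/2$ ball. Your construction instead interleaves integer ``anchor'' vertices with perturbed vertices, using the anchors to pin down the alignment and then reading off an $\ell_\infty$ separation on the perturbed coordinates. Both ideas exploit the same principle (large gaps between consecutive base values kill the alignment freedom), but the paper's version is shorter: it uses all $m$ coordinates for separation rather than spending half of them on anchors, so it yields doubling constant $\ge 2^m$ rather than $2^{\lfloor m/2\rfloor}$, and it avoids the case analysis you need for the odd/even positions. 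Your anchor technique, on the other hand, makes the ``traversal must be diagonal'' step completely explicit, which is pedagogically cleaner.
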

\begin{proof}
Note that any time series with complexity less than $m$ can be expanded to a time series of complexity $m$ by repeating the first vertex of the element, without affecting the discrete Fr\'echet distance.
We first show that the doubling dimension is at least $m$. Consider a time series $x = (x_1,\dots,x_m)$  such that $x_i = 3\cdot i $ for each $i\in [m]$. Let $B$ be the discrete Fr\'echet ball of radius $1$ centered at $x$, i.e. $B = \{y \in \RR^m~|~ d_{dF}(x,y) \leq 1\}$. All time series defined by sequences in $\prod_{i=1}^m \{x_i-1,x_i+1\}$ are in $B$, but no two of them can be covered by the same ball of radius $1/2$. Hence, we need at least $2^m$ balls of radius $1/2$ to cover $B$, implying that the doubling dimension is at least $m$.

For the upper bound, consider any time series $x=(x_1,\ldots,x_m)$ and any radius $r>0$. Let $B_r$ be the discrete Fr\'echet ball of radius $r$ centered at $x$. For each $i\in [m]$,
let $a_i := x_i - r/2$, $b_i := x_i + r/2$. Let  $T$ be the optimal traversal between $x$ and an arbitrary time series $y\in B_r$ of complexity $m$.  For each $j\in [m]$, let $i_{j}$ be the first index (pointing to the $i_{j}$th vertex of $x$) paired with the $j$th point of $y$ in $T$. It must hold 
$y_j \in [x_{i_j} - r,x_{i_j}]$
or 
$y_j \in [x_{i_j}, x_{i_j}+r]$.
Hence, the set $\prod_{j=1}^m \{a_j,b_j\}$ contains a time series within discrete Fr\'echet distance $r/2$ from $y$. By taking into account all traversals and for each traversal all relevant combinations of points $a_i,b_i$, we cover the entire $B_r$ with balls of radius $r/2$. By \cite{FFK23}[Lemma 4], 
the number of traversals is at most $4^m$, and for each traversal, we consider at most $2^m$ time series as centers of the balls of radius $r/2$, implying an upper bound of $O(m)$ on the doubling dimension. 
\end{proof}

\subsection{Putting Everything Together}

The framework of Cohen-Addad et al. \cite{CFS21} provides a near-linear time approximation algorithm for the $k$-median problem in metrics of low doubling dimension. 
Formally they define the problem as follows. For a set of clients $C$, a set of candidate centers $F$, an integer $k$, and
a function $\chi :~ C \mapsto \{1, \ldots , n\}$ the goal is to minimize $\sum_{c\in C} \chi(c) \cdot \min_{f\in S} dist(c, f)$. Note that the function $\chi$ can be utilized to encode multiplicities of the clients.

The following statement is slightly rephrased from \cite{CFS21}[Theorem 1]. 

\begin{theorem}[\cite{CFS21}]
\label{thm:cohenaddadetal}
    For any $0 < \eps < 1/3$, set of clients $C$ and set of candidate centers $F$, there exists a randomized $(1+\eps)$-approximation algorithm for $k$-median in metrics of doubling dimension d with running time $2^{{(1/\eps)}^{O(d^2)}}
n \log^4 n+ 2^{O(d)}n \log^9 n$ and success probability at least $1 - 2\eps$, where $n = |C \cup F|$ and the time needed to access any distance is assumed to be constant.
\end{theorem}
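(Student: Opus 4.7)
The plan is to follow the standard blueprint for a near-linear time PTAS in a doubling metric: (i) build a hierarchical net decomposition, (ii) impose a randomized cut structure whose expected distortion is small, and (iii) run a dynamic program over the hierarchy whose state is small thanks to the doubling dimension. Since the metric $M = C \cup F$ has doubling dimension $d$, I would first construct a $2$-HST-like hierarchy: starting from an $r$-net at scale $2^i$ for each $i$ from the diameter down to the closest-pair distance, where each net point at level $i$ is associated with points in its Voronoi cell at that scale. A key doubling-dimension fact I would rely on is that each net point at level $i$ has only $2^{O(d)}$ children at level $i-1$, which keeps the total size of the hierarchy $O(n \log n)$.

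Next, I would apply a randomized split-tree construction in the spirit of Talwar or Cole-Gottlieb-Roditty: within each level, shift the net by a random offset and break ties randomly so that, for any fixed pair of points at distance $\delta$, the probability they are separated at level $i$ is $O(\delta / 2^i) \cdot 2^{O(d)}$. The structural lemma I would need is that there exists a $(1+\eps)$-approximate solution $C^\star \subseteq F$ of size $k$ and an assignment $\sigma^\star: C \to C^\star$ such that with probability at least $1-\eps$ over the random splits, every pair $(c, \sigma^\star(c))$ is ``captured'' at a level only $O(\log(1/\eps))$ coarser than the level where $d(c, \sigma^\star(c))$ would naturally lie. Equivalently, the assignment is $\eps$-\emph{portal respecting} with respect to a sparse set of portals placed around each node in the hierarchy; standard averaging arguments control the expected extra cost by $\eps$ times the optimum.

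The main algorithmic step is a dynamic program over the hierarchy. At each node $v$ I would store, for each subset of portals of $v$ and each count $k' \le k$ of centers opened inside the subtree of $v$, the cheapest partial assignment compatible with ``clients in the subtree are either assigned to one of those portals outside, or to one of the opened inside centers.'' To combine children, aggregate portal subsets using the doubling-dimension bound: each node has at most $(1/\eps)^{O(d)}$ relevant portals, so the number of DP states per node is $2^{(1/\eps)^{O(d)}}$, and each can be updated in time $2^{O(d)} \cdot \mathrm{polylog}(n)$ per child. Summed over the hierarchy this yields the claimed $2^{(1/\eps)^{O(d^2)}} n \log^4 n + 2^{O(d)} n \log^9 n$ bound; the $\log^9 n$ term absorbs the cost of constructing the decomposition and the auxiliary approximate distance oracles (e.g., a navigating net or an $O(1)$-approximate nearest-neighbor structure so that each distance lookup is charged as $O(1)$).

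The hard part will not be the DP itself but establishing the structural lemma that lets one restrict attention to portal-respecting solutions without blowing up the cost: this requires arguing that the randomized hierarchy, despite its $2^{O(d)}$ branching, produces cuts that simultaneously behave well for \emph{all} $O(k)$ clusters of some near-optimal solution, and then rounding the optimal fractional assignment to one that only uses portals. I would handle this via a careful exchange argument on the hierarchy combined with the classical observation that in a doubling metric a ball of radius $r$ is covered by $(1/\eps)^{O(d)}$ balls of radius $\eps r$, which lets the portal set at each node have the stated size. The second delicate point is ensuring that the DP running time really depends only on $d$ and $\eps$ in the exponent and linearly (up to $\log$ factors) in $n$; this relies on never materializing the full set of portal subsets at a node but instead enumerating only those subsets consistent with the partial solutions bubbling up from the children, a pruning step that I expect to require the most care to justify rigorously.
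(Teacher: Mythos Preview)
The paper does not prove this statement at all: \Cref{thm:cohenaddadetal} is quoted verbatim (slightly rephrased) from \cite{CFS21} and used as a black box in the proof of \Cref{theorem:final_formal}. There is nothing to compare your proposal against, because the paper's ``proof'' is simply the citation.

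Your sketch is in the right general spirit of how \cite{CFS21} actually proceeds (hierarchical decomposition via nets, a randomized split-tree, portals, and a dynamic program whose state size is controlled by the doubling dimension), so it is not wrong as an outline. But you should be aware that the actual argument in \cite{CFS21} hinges on a notion of \emph{badly cut} clients and facilities and an iterative preprocessing step that reroutes them, rather than the purely Arora/Talwar-style portal-respecting exchange you describe; the $d^2$ in the exponent and the specific polylog factors come from that machinery. For the purposes of the present paper none of this is needed: you may simply invoke the theorem.
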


Our solution builds upon \Cref{thm:cohenaddadetal} as follows. 
First, we reduce the complexity $m$ of the input time series to a complexity that only depends on $\ell,\eps$ using \Cref{alg: ComplexityReduction}, to obtain a set of clients $C$ with constant complexity. Then, 
we compute a set of candidate centers $F$ using \Cref{lemma:mediancandidates}, which has a size near linear in $n$, and independent of $m$. 
Then \Cref{prop:ddim} implies that the doubling dimension of the resulting space is upper bounded by a function of $\ell,\eps$. Strictly speaking, the Fr\'echet distance is a pseudo-metric as distance between distinct curves can be $0$. Therefore, one has to consider a metric space defined on the equivalence classes of curves with pairwise distance $0$. Each equivalence class has a corresponding representative time series and, for some given time series $x$ its representative can be computed by removing all duplicates of neighboring values in $x$. 

Now the algorithm of \Cref{thm:cohenaddadetal} can be run in time that is linear in $n$,$m$ and importantly without any exponential dependence on $k$. The pseudocode can be found in \Cref{{alg:ltas}}.

\begin{algorithm}
\caption{\label{alg:ltas}}
\begin{algorithmic}[1]

\Procedure {\texttt{NLTAS}}{$P\subset \RR^m$, $k,\ell \in \NN$, $\eps \in (0,1)$}
\State $C \gets $ for each $x\in P$, run \Cref{alg: ComplexityReduction} with input $x,\ell,\eps/16$ and store output in $C$
\State $F\gets$ output of the algorithm of \Cref{lemma:mediancandidates} with input $C,k,\ell$ and $\eps/12$.
\State $Sol \gets$ output of the algorithm of \Cref{thm:cohenaddadetal} with input $
C,F,k,\eps/4 $
\State \Return $Sol$
\EndProcedure
\end{algorithmic}
\end{algorithm}

\begin{theorem}
\label{theorem:final_formal}
Let $\eps\in (0,1/2]$ and $\ell \in \mathbb N$ be constants. Given a set $P$ of $n$ real-valued time series of complexity $m$, and parameter $k$ Algorithm \ref{alg:ltas} computes in time
\[
O( nm \log^2 m +  n \log^{10} n )
\]
and with success probability at least $1-\eps$ a $(1+\eps)$-approximation to the $(k,\ell)$-median problem under discrete Fréchet distance. 
\end{theorem}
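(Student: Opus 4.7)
The plan is to combine the three approximation guarantees (dimension reduction, candidate center construction, and the Cohen-Addad et al.\ approximation scheme) with appropriate choices of $\varepsilon$ scaling and then verify the running time and success probability. For correctness, I will first use Theorem~\ref{theorem:dimreduction} to argue that for each $x\in P$ the reduced time series $z_x\in C$ obtained with parameter $\varepsilon/16$ satisfies $(1-\varepsilon/16)d_{dF}(x,y)\le d_{dF}(z_x,y)\le (1+\varepsilon/16)d_{dF}(x,y)$ for every time series $y\in\mathbb R^{\ell}$. Summing over $P$ and any candidate $k$-center set $C'\subset\mathbb R^{\ell}$ of complexity $\ell$, this shows that the $(k,\ell)$-median cost on $C$ and on $P$ agree up to a factor $(1\pm\varepsilon/16)$; in particular, the optimum cost is preserved up to this factor when switching between $P$ and $C$.

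Next I will invoke Lemma~\ref{lemma:mediancandidates} on the client set $C$ with parameter $\varepsilon/12$ to obtain a candidate-center set $F\subset\mathbb R^{\ell}$ of size $O(n\log n)\cdot(1/\varepsilon)^{\ell}$ that contains a $k$-set achieving cost at most $(1+\varepsilon/4)$ times the $(k,\ell)$-median optimum on $C$. Now applying Theorem~\ref{thm:cohenaddadetal} to the metric obtained by restricting the discrete Fr\'echet distance to $C\cup F$ returns a set $\mathrm{Sol}\subseteq F$ whose cost on $C$ is at most $(1+\varepsilon/4)$ times the best cost achievable with centers in $F$. This is where Proposition~\ref{prop:ddim} enters: since every element of $C\cup F$ has complexity at most $\max(d_0(\varepsilon/16,\ell),\ell)$, the ambient space has doubling dimension bounded by a constant $d$ depending only on $\varepsilon$ and $\ell$, which is what Theorem~\ref{thm:cohenaddadetal} requires. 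Composing the factors, the cost of $\mathrm{Sol}$ on $C$ is at most $(1+\varepsilon/4)^2$ times the $(k,\ell)$-median optimum on $C$, and translating back to $P$ via the dimension reduction multiplies this by at most $(1+\varepsilon/16)/(1-\varepsilon/16)\le 1+\varepsilon/4$ for $\varepsilon\le 1/2$. A direct computation shows $(1+\varepsilon/4)^3\le 1+\varepsilon$, giving the claimed approximation ratio.

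For the running time, the dimension reduction step is the only part that depends on $m$: by Theorem~\ref{theorem:dimreduction} each call runs in $O(m\log^2 m)+O(\ell/\varepsilon)^{2d_0}$, so with constants $\ell,\varepsilon$ and over all $n$ inputs this contributes $O(nm\log^2 m)$. The set $C$ consists of time series of constant complexity $d_0$, so the $m^3 n\ell$ term in Lemma~\ref{lemma:mediancandidates} becomes a constant times $n$, and the overall cost of constructing $F$ is $O(n\log n)\cdot(1/\varepsilon)^{\ell}=O(n\log n)$. Finally, Theorem~\ref{thm:cohenaddadetal} applied on $C\cup F$ of size $O(n\log n)$ with constant doubling dimension $d$ runs in $O(n\log^9 n)$ (with constant hidden depending on $\varepsilon,\ell$), easily absorbed into the claimed $O(n\log^{10} n)$ bound; distances needed by the subroutine can be computed in constant time since the complexities are constant. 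The success probability follows since the only randomized step is Theorem~\ref{thm:cohenaddadetal}, invoked with parameter $\varepsilon/4$, giving success probability $1-2(\varepsilon/4)=1-\varepsilon/2\ge 1-\varepsilon$.

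The main subtlety is verifying that the doubling dimension assumption of Theorem~\ref{thm:cohenaddadetal} is actually legitimate on $C\cup F$: one must either work with the pseudo-metric quotient noted after Proposition~\ref{prop:ddim} (collapsing neighboring duplicates) or observe that duplicates do not affect any distance computation in the algorithm, so the analysis of~\cite{CFS21} goes through verbatim. A second routine but necessary check is that each of the three $\varepsilon$-dependent approximation factors is indeed multiplicative on the cost (not only on individual distances), which holds because all three guarantees are stated pointwise on pairwise Fr\'echet distances to any complexity-$\ell$ center.
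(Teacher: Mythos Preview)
Your proposal is correct and follows essentially the same approach as the paper: both compose the three multiplicative factors $(1+\eps/4)^3\le 1+\eps$ coming from the dimension reduction (Theorem~\ref{theorem:dimreduction} with $\eps/16$), the candidate-center construction (Lemma~\ref{lemma:mediancandidates} with $\eps/12$), and Theorem~\ref{thm:cohenaddadetal} with $\eps/4$, invoke Proposition~\ref{prop:ddim} to bound the doubling dimension of $C\cup F$ by a constant in $\eps,\ell$, and bound the running time term-by-term. One tiny slip: since $|C\cup F|\in O(n\log n)$ rather than $O(n)$, the Cohen-Addad step costs $O(n\log^{10} n)$ directly, not $O(n\log^9 n)$ ``absorbed'' into it---but your stated final bound is correct.
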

\begin{proof}
By \Cref{theorem:dimreduction}, \Cref{alg: ComplexityReduction} runs in $O(m \ell \log^2 m) + O(\ell/\eps)^{2 d_{0}}$ time and outputs 
a time series $z \in \RR^{d_0}$, where $d_0 = d_0(\varepsilon, \ell)\geq 2\ell$. Hence, computing $C\subset \RR^{d_0}$ costs 
$O(nm \ell \log^2 m) + n\cdot O(\ell/\eps)^{2 d_{0}}$ 
time. 
The algorithm of \Cref{lemma:mediancandidates} runs in time $O(d_0^3 n\ell)  + O(n\log n)\cdot O(1/\eps)^{\ell}$ and outputs a set $F \subset \RR^\ell$ of size $O( n \log n) \cdot O(1/\eps)^{\ell}$. 
Finally, we run the algorithm of \Cref{thm:cohenaddadetal} on $C,F$, where $|C\cup F| \in  O(n\log n)\cdot O(1/\eps)^{\ell}$. Both $C$ and $F$ can be seen as subsets of $\RR^{d_0}$, since the time series of complexity $\ell$ can be expanded while preserving all distances, by duplicating the last vertex sufficiently many times. Hence, the doubling dimension of the resulting ambient space is in $O(d_0)$ by \Cref{prop:ddim}. By \Cref{thm:cohenaddadetal} and assuming constant time distance evaluations, the running time of this step is 
\[
O\left(\frac{1}{\eps}\right)^{\ell}\cdot 
2^{{(1/\eps)}^{O\left(d_0^2\right)}} O(n \log^5 n)+ O\left(\frac{1}{\eps}\right)^{\ell}\cdot O( 2^{O(d_0)}n \log^{10} n ).
\]
Since $\eps, \ell$ and $d_0$ are constants distances can be computed in constant time and furthermore we can simplify the running time to \[
O( nm \log^2 m +  n \log^{10} n ).
\]

To show correctness, first observe that by \Cref{theorem:dimreduction}, the cost of any solution for $P$ is preserved up to a factor of $(1\pm\frac{\eps}{16})$ after reducing the complexity of each $x\in P$ resulting in $C$. The optimal solution for $C$ corresponds to a solution for $P$ which is at most $\frac{1+\eps/16}{1-\eps/16}\leq 1+\eps/4$ from the optimal. 
Now by \Cref{lemma:mediancandidates}, there will be a solution consisting of medians from $F$ which has cost at most $(1+\frac{\eps}{4})$ times that of the optimal solution for $C$. Therefore, the execution of the algorithm of \Cref{thm:cohenaddadetal} on $C,F$ will find a solution consisting of $k$ time series from $F$ that has a cost of at most  $(1+\frac{\eps}{4})^2$ times that of the optimal solution for $C$. This solution has a cost of at most  $(1+\frac{\eps}{4})^3 \leq 1+\eps$ times that of the optimal solution for $P$. 
\end{proof}

\section{Coreset for $(k,\ell)$-median}

In this section, we discuss an additional implication of our dimension reduction result in the context of constructing coresets for $(k,\ell)$-median. Given a set $P$ of $n$ time series, an $\eps$-coreset of $P$ for the $(k,\ell)$-median problem, is a weighted set $S\subseteq P$ such that for any $C\subset \RR^{\ell}$, $|C|=k$, 
\[(1-\eps)\sum_{x \in P}\min_{c \in C} d_{dF}(x,c)\leq \sum_{x \in S} w_x \cdot \min_{c \in C}  d_{dF}(x,c) \leq (1+\eps)\sum_{x \in P}\min_{c \in C} d_{dF}(x,c) ,\] 
where $w_x$ is the weight associated with $x$. 

A recent result of Cohen-Addad et al.~\cite{CDRSS25}~[Corollary 7.2] implies coresets for the $(k,\ell)$-median problem under the discrete Fr\'echet distance of size $\tilde{O}(\eps^{-2} k \ell \log(m) )$.  By combining this result with \Cref{theorem:dimreduction}, we obtain coresets of size $\tilde{O}(\eps^{-2}k \ell \log (d_0))$, i.e., completely independent of the size of the input since $d_0$ is a function of $\ell,\eps$. The result is formally stated as follows.
\begin{corollary}
        Let $\varepsilon \in (0,1)$ and $k,\ell \in \NN$. There exists  $d_0 = d_0(\varepsilon, \ell)$ such that for any set $P$ of time series there exists an $\eps$-coreset for the $(k,\ell)$-median problem of size $\tilde{O}(\eps^{-2}k \ell \log (d_0))$. 
\end{corollary}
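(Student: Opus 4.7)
The plan is to compose the dimension reduction of Theorem \ref{theorem:dimreduction} with the existing coreset construction of Cohen-Addad et al.~\cite{CDRSS25}. Set $\eps' = \eps/4$, so that $(1+\eps')^3 \leq 1+\eps$ and $(1-\eps')^3 \geq 1-\eps$ for $\eps \in (0,1)$. By Theorem \ref{theorem:dimreduction}, there exists a constant $d_0 = d_0(\eps',\ell)$ such that applying Algorithm \ref{alg: ComplexityReduction} to each $x \in P$ with parameter $\eps'$ produces a time series $z_x \in \RR^{d_0}$ satisfying
\[
(1-\eps')\,d_{dF}(x,c) \;\leq\; d_{dF}(z_x,c) \;\leq\; (1+\eps')\,d_{dF}(x,c) \qquad \text{for every } c \in \RR^\ell.
\]
Let $P' = \{z_x : x \in P\}$, treated as an indexed multiset via the bijection $x \mapsto z_x$.

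Next, invoke the coreset construction of \cite{CDRSS25}[Corollary 7.2] on $P'$ with error parameter $\eps'$. Since every time series in $P'$ has complexity $d_0$, this produces a weighted subset $S' \subseteq P'$ of size $\tilde O(\eps^{-2} k \ell \log d_0)$ with weights $\{w_{z}\}_{z \in S'}$ satisfying, for every $C \subset \RR^\ell$ with $|C|=k$,
\[
(1-\eps')\sum_{z\in P'}\min_{c\in C} d_{dF}(z,c) \;\leq\; \sum_{z\in S'} w_z \min_{c\in C} d_{dF}(z,c) \;\leq\; (1+\eps')\sum_{z\in P'}\min_{c\in C} d_{dF}(z,c).
\]
Finally, take $S \subseteq P$ to be the preimage of $S'$ under $x \mapsto z_x$, with weight $w_x := w_{z_x}$ for each $x \in S$. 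The size of $S$ equals the size of $S'$.

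To verify the coreset guarantee, fix any $C \subset \RR^\ell$ with $|C|=k$ and chain the two distance bounds with the coreset property of $S'$:
\begin{align*}
\sum_{x\in S} w_x \min_{c\in C} d_{dF}(x,c)
&\in (1\pm\eps')\sum_{z\in S'} w_z \min_{c\in C} d_{dF}(z,c)\\
&\subseteq (1\pm\eps')^2 \sum_{z\in P'} \min_{c\in C} d_{dF}(z,c)\\
&\subseteq (1\pm\eps')^3 \sum_{x\in P} \min_{c\in C} d_{dF}(x,c)\\
&\subseteq (1\pm\eps)\sum_{x\in P} \min_{c\in C} d_{dF}(x,c),
\end{align*}
using the dimension-reduction bound pointwise in the first and third steps (with the minimum over $C$ preserving multiplicative factors) and the coreset guarantee for $S'$ in the second step. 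This exhibits $S$ as an $\eps$-coreset of the stated size.

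There is no real obstacle here: both ingredients are black boxes, and the only care required is the compositional bookkeeping of three multiplicative errors, handled by the choice $\eps' = \eps/4$. The result is purely existential because the bound $d_0 = d_0(\eps',\ell)$ from Theorem \ref{theorem:dimreduction} is itself only existential.
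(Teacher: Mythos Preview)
Your proposal is correct and follows exactly the approach the paper indicates: apply the dimension reduction of Theorem~\ref{theorem:dimreduction} to reduce the complexity of each input time series to $d_0$, then invoke the coreset of~\cite{CDRSS25} on the reduced instance, and pull the weighted subset back to $P$. The paper itself gives only a one-sentence sketch (``By combining this result with Theorem~\ref{theorem:dimreduction}\ldots''), and your argument correctly fills in the compositional bookkeeping, including the observation that the pointwise $(1\pm\eps')$ distortion is preserved under taking the minimum over $c\in C$.
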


\section{Conclusions}

We have presented the first near-linear time approximation scheme for $(k,\ell)$-median clustering under discrete Frechet distance when the ambient dimension of the input time series is $1$. 
One obvious open question is to extend the result to time series of higher ambient dimension. The main challenge here is to extend the dimension reduction. Another interesting open problem is to give a constructive upper bound on the target dimension. 
\bibliography{biblio}
\bibliographystyle{plainurl}

\end{document}